\newcommand{\Tr}{\operatorname{Tr}}
\newcommand{\be}{\begin{equation}}
\newcommand{\ee}{\end{equation}}
\newcommand{\ba}{\begin{eqnarray}}
\newcommand{\ea}{\end{eqnarray}}
\newcommand{\tr}{\operatorname{Tr}}
\newtheorem{theorem}{Theorem}
\newtheorem{corollary}{Corollary}
\newtheorem{definition}{Definition}
\newtheorem{proposition}{Proposition}
\newtheorem{observation}{Observation}
\newtheorem{lemma}{Lemma}
\def\>{\rangle}
\def\<{\langle}
\begin{document}

\title{Composition of multipartite quantum systems: perspective from time-like paradigm}

\author{Sahil Gopalkrishna Naik}
\affiliation{School of Physics, IISER Thiruvananthapuram, Vithura, Kerala 695551, India.}

\author{Edwin Peter Lobo}
\affiliation{School of Physics, IISER Thiruvananthapuram, Vithura, Kerala 695551, India.}

\author{Samrat Sen}
\affiliation{School of Physics, IISER Thiruvananthapuram, Vithura, Kerala 695551, India.}

\author{Ram Krishna Patra}
\affiliation{School of Physics, IISER Thiruvananthapuram, Vithura, Kerala 695551, India.}

\author{Mir Alimuddin}
\affiliation{School of Physics, IISER Thiruvananthapuram, Vithura, Kerala 695551, India.}

\author{Tamal Guha}
\affiliation{Department of Computer Science, The University of Hong Kong, Pokfulam Road 999077, Hong Kong.}

\author{Some Sankar Bhattacharya}
\affiliation{International Centre for Theory of Quantum Technologies, University of Gdansk, Wita Stwosza 63, 80-308 Gdansk, Poland.}

\author{Manik Banik}
%\email{manik11ju@gmail.com}    
\affiliation{School of Physics, IISER Thiruvananthapuram, Vithura, Kerala 695551, India.}

\begin{abstract}
Figuring out the physical rationale behind natural selection of quantum theory is one of the most acclaimed quests in quantum foundational research. This pursuit has inspired several axiomatic initiatives to derive mathematical formulation of the theory by identifying general structure of state and effect space of individual systems as well as specifying their composition rules. This generic framework can allow several consistent composition rules for a multipartite system even when state and effect cones of individual subsystems are assumed to be quantum. Nevertheless, for any bipartite system, none of these compositions allows beyond quantum space-like correlations. In this letter we show that such bipartite compositions can admit stronger than quantum correlations in the time-like domain and, hence, indicates pragmatically distinct roles carried out by state and effect cones. We discuss consequences of such correlations in a communication task, which accordingly opens up a possibility of testing the actual composition between elementary quanta.
\end{abstract}

%\pacs{03.65.Ta,03.65.Ud, 03.67.Dd}
%\keywords{}

% 03.65.Ta	Foundations of quantum mechanics;
% 03.67.Dd	Quantum cryptography and communication security
% 03.67.Hk	Quantum communication
% 03.65.Ud	Entanglement and quantum nonlocality

\maketitle
{\it Introduction.--} The idea of composition plays a crucial role in fabricating our worldview and accordingly guides us while constructing theories for the physical world \cite{Hardy13}. For instance, the objects we encounter in our daily life -- household items, machines, communication devices, computers {\it etc} -- can be thought of as being composed of more elementary parts, whereas {\it Penrose tribar} describes an interesting composite object yet impossible to exist \cite{Penrose58}. On the other hand, a region of spacetime with fields on it can be thought of as being composed of many smaller regions joined at their boundaries \cite{HardyGR}. Quantum formalism also assumes a particular composition rule while describing systems consisting of more than one subsystem \cite{Weyl31,Einstei35,Ozaw04,Zurek09,Masanes19}. Considering the individual systems to be quantum, one can construct several mathematically consistent models to describe state and effect spaces of a multipartite system, where consistency demands the outcome probability obtained from any pair of valid state and effect to be a positive number between zero and one. Constraints arising from physical and/or information theoretic demand may further abridge the scope of possible compositions. For instance, the state space of a bipartite system satisfying {\it no signaling} principle and {\it tomographic locality} postulate lies within two extremes -- {\it minimal} tensor product and {\it maximal} tensor product \cite{Hardy11,Namioka69,Barker,Auburn20}. The corresponding effect spaces are specified in accordance with the `no-restriction' hypothesis \cite{Chiribella10} which demands any mathematically well-defined measurement to be physically allowed. For brevity, the resulting theories arising from these two compositions will be denoted by $SEP$ and $\overline{SEP}$, respectively. In between these two extremes, many other compositions can be introduced among which quantum ($\mathcal{Q}$) composition is one example. 

Naturally the question arises whether there exist input-output correlations that are specific to some particular composite structure. An interesting answer stems from the Bell nonlocal correlations \cite{Bell64,Bell66,Mermin93,Brunner14} that are unavailable in $SEP$ theory, whereas all other compositions contain such nonlocal correlations. On the other hand, a {\it no-go} result can be argued from the work of Barnum {\it et al.} \cite{Barnum10} -- any bipartite space-like correlation obtained in $\overline{SEP}$ is also achievable in $\mathcal{Q}$. In fact, any composite model of two quantum systems satisfying the no signaling principle cannot have a beyond quantum space-like separated correlation within its description. It might be tempting to presume that any input-output correlation obtained in $SEP$ should also be achievable in $\mathcal{Q}$, since the role of state and effect cones in $SEP$ and $\overline{SEP}$ theories just get interchanged (see Fig.\ref{fig1}). In this letter we show that this naive intuition is, in-fact, not correct. There, indeed, exist time-like correlations in $SEP$ that cannot be obtained in $\mathcal{Q}$ composition. In fact, such correlations, with different strength, might exist in different composite models indicating empirical distinction among these compositions. In other words, mathematically consistent bipartite composition of elementary systems can have beyond quantum time-like correlations even when the elementary systems allow quantum description. We establish the above thesis through a communication task (game) involving two parties. We first analyze the optimal qubit communication required to accomplish the task when quantum composition is assumed. In fact, a necessary and sufficient condition for perfect accomplishment of the task is derived in generalized probabilistic theory (GPT) framework. We then show that for perfect success of the game the required number of qubits to be communicated is strictly less than quantum if $SEP$ composition is considered. Thus $SEP$ composition makes certain communication complexity problems trivial in comparison to quantum composition. It is worth mentioning that starting with an information-theoretic axiom  `{\it communication complexity is nontrivial}' researchers have identified `unphysical' consequences of beyond quantum space-like correlations \cite{Popescu94,vanDam13,Brassard06,Forster09,Brunner09,Shutty20} (see \cite{Buhrman10} for a review on communication complexity). Our study establishes that the same axiom can be efficiently utilized to isolate beyond quantum correlations in time-like domain.
Our proposed communication task provides an empirically testable criterion towards natural selection of the bipartite composite structure among different possible compositions lying in between $SEP$ and $\overline{SEP}$. 

{\it Preliminaries.--} First we briefly recall the framework of GPT since some of our results will be proved in this generalized framework. Although the framework had originated much earlier \cite{Mackey63,Ludwig67,Davies70}, advent of quantum information theory brought renewed interest in this framework \cite{Hardy01,Barrett07,Chiribella10,Barnum11,Masanes11}. For an overview of this framework we also refer to the works \cite{Gudder79,Busch97,Barnum06,Banik15,Plavala21}. A GPT is specified by a list of system types and the composition rules specifying combination of several systems. In prepare and measure scenario, a system ($\mathcal{S}$) is described by the tuple $(\Omega,E)$. Here $\Omega$ represents the collection of normalized states, generally a compact-convex set embedded in some positive cone $V_+$ (collection of unnormalized states) of some real vector space $V$. On the other hand, $E$ denotes the collection of effects, where an effect $e\in E$ corresponds to a linear functional $e:\Omega\mapsto[0,1]$, with $e(\omega)$ denoting the success probability of filtering the effect $e$ on the state $\omega\in\Omega$ when some measurement $M:=\{e_i~|~e_i\in E~\forall~i,~\sum_ie_i=u,~\&~u(\omega)=1~\forall~\omega\in\Omega\}$ is performed. The unnormalized effects form a cone $V^\star_+$ which is dual to the state cone $V_+$. For instance, state cone of a quantum system associated with Hilbert space $\mathcal{H}$ consists of the set of positive operators $\mathcal{T}_+(\mathcal{H})\subset\mathcal{T}(\mathcal{H})$ acting on $\mathcal{H}$, where $\mathcal{T}(\mathcal{H})$ denotes the set of all Hermitian operators on $\mathcal{H}$. A normalized state $\rho$ is an element of $\mathcal{T}_+(\mathcal{H})$ with trace one and their collection is the convex-compact set of density operators $\mathcal{D}(\mathcal{H})$. A generic quantum measurement corresponds to positive operator valued measure (POVM) $M:=\{\pi_i~|~\pi_i\in \mathcal{T}_+(\mathcal{H}),~\sum_i\pi_i=\mathbf{1}_{\mathcal{H}}\}$, with $\mathbf{1}_{\mathcal{H}}$ being the identity operator on $\mathcal{H}$.
\begin{figure}[t!]
\centering
\includegraphics[width=0.5\textwidth]{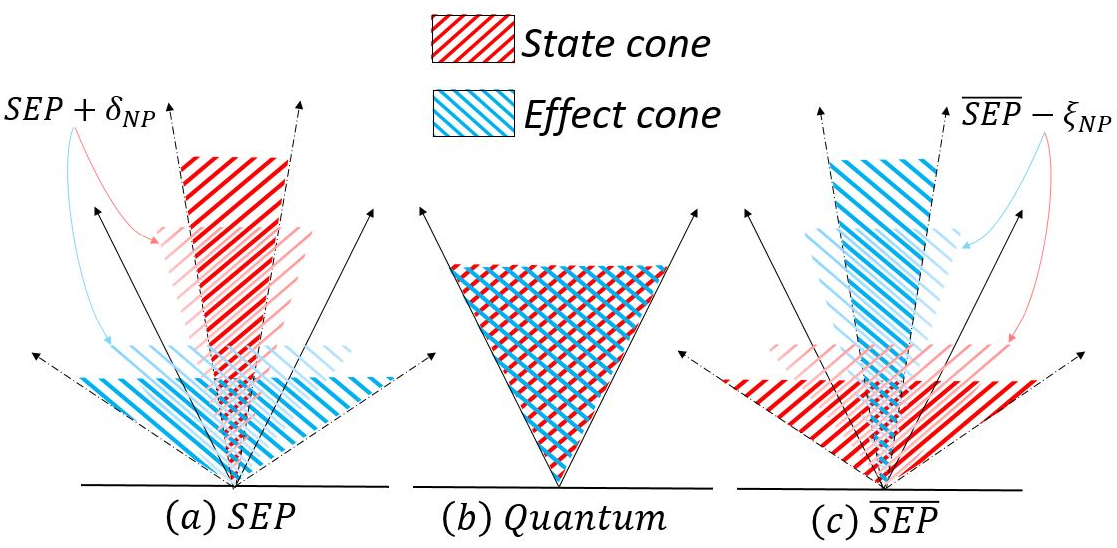}
\caption{(Color online) Intuitive representation of the state and effect cones for different compositions of two individual quantum systems. (a) $SEP$ theory: only separable states are allowed, whereas effects are more exotic than the effects allowed in quantum theory; (b) Quantum theory: state cone and effect cone exactly overlap ({\it self dual}); (c) $\overline{SEP}$ theory: effect cone is constrained in comparison to quantum theory, but allows states that are more exotic than quantum states. State cone of $SEP+\delta_{NP}$ lies strictly in between $SEP$ and $\mathcal{Q}$, whereas for $\overline{SEP}-\xi_{NP}$ theory it lies strictly in between $\mathcal{Q}$ and $\overline{SEP}$. Examples of such compositions are shown in (a) \& (b) with light shaded cones.}\label{fig1}
\end{figure}

A set of states $\{\omega_i\}\subset\Omega$ are called perfectly distinguishable if there exists a measurement $M\equiv\{e_i~|~\sum_ie_i=u\}$ such that $e_i(\omega_j)=\delta_{ij}$. Given a system $\mathcal{S}\equiv(\Omega,E)$, the maximal cardinality of the set of states that can be perfectly distinguished is known as the operational dimension of the system. On the other hand, the maximal cardinality of the set of states that can be perfectly distinguished pairwise in known as the information dimension of the system \cite{Brunner14(1)}. Note that, in case of operational dimension only one measurement is allowed to distinguished the states, whereas for information dimension different measurements for distinguishing different pairs are allowed. 

Given two systems $\mathcal{S}_A\equiv(\Omega_A,E_A)$ and $\mathcal{S}_B\equiv(\Omega_B,E_B)$ the state space of the composite system $\mathcal{S}_{AB}\equiv(\Omega_{AB},E_{AB})$ is embedded in the tensor product space $V_A\otimes V_B$ although the choice of $\Omega_{AB}$ in not unique. However, the {\it no signaling} principle and {\it tomographic locality} \cite{Hardy11} postulate narrow down the choices within two extremes -- the minimal tensor product space and maximal tensor product space, resulting in theories we will call $SEP$ and $\overline{SEP}$, respectively. For two systems, each described by quantum theory individually, the state cone in $SEP$ theory is given by
\footnotesize
\begin{align*}
\left(V_{AB}^{SEP}\right)_+:=\left\{\sum_i\pi_i^A\otimes \pi_i^B~|~\forall~i,~\pi_i^A\in\mathcal{T}_+(\mathcal{H_A})~\&~\pi_i^B\in\mathcal{T}_+(\mathcal{H_B})\right\}.
\end{align*}    
\normalsize
The effect cone $E_{AB}^{SEP}$ is dual to $\Omega_{AB}^{SEP}$ and is given by
\footnotesize
\begin{align*}
\left(V_{AB}^{SEP}\right)_+^\star:=\left\{Y\in\mathcal{T}(\mathcal{H_A}\otimes\mathcal{H}_B)~|~\Tr(XY)\ge0~\forall~X\in\left(V_{AB}^{SEP}\right)_+\right\}.
\end{align*}    
\normalsize
$SEP$ theory only allows separable states whereas the state cone $(V_{AB}^{\mathcal{Q}})_+:=\left\{Y\in\mathcal{T}_+(\mathcal{H_A}\otimes\mathcal{H}_B)\right\}$ of quantum theory contains both product and entangled states. On the other hand, effects in $SEP$ theory can be more exotic than quantum entangled effects. For instance, entanglement witnesses \cite{Horodecki09} can be a valid effect in this theory as they yield positive probability on any separable state. Thus we have the following set inclusion relations
\footnotesize
\begin{align*}
\left(V_{AB}^{SEP}\right)_+\subset\left(V_{AB}^{\mathcal{Q}}\right)_+=\left(V_{AB}^{\mathcal{Q}}\right)_+^\star\subset\left(V_{AB}^{SEP}\right)_+^\star.
\end{align*}    
\normalsize
\begin{figure}[t!]
\centering
\includegraphics[width=0.45\textwidth]{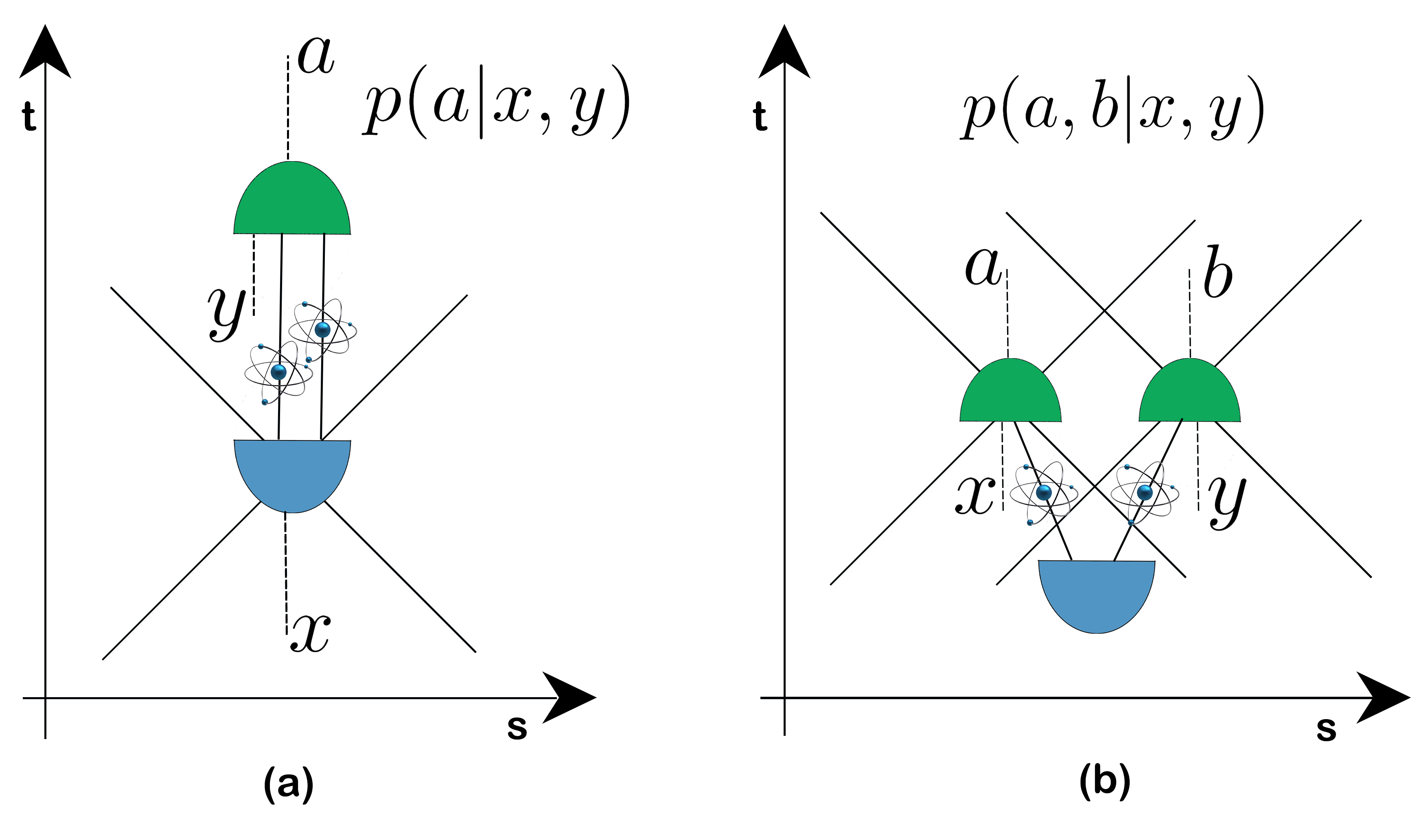}
\caption{(Color online) Time-like and space-like correlations. (a) the preparation (blue) and measurement (green) devices receive inputs $x$ and $y$, respectively and finally an outcome $a$ is obtained. The blue device prepares two elementary systems which are individually described as quantum bits, but the global description of the preparation is unknown. This setting can generate stronger-than-quantum correlation as shown in Theorem \ref{theo1}. Whereas in figure (b) correlations generated by space-like separated measurement devices acting on two qubits always imply a quantum description of the joint preparation\cite{Barnum10}. }\label{fig2}
\end{figure}
Equality between $\left(V_{AB}^{\mathcal{Q}}\right)_+~\&~\left(V_{AB}^{\mathcal{Q}}\right)_+^\star$ for quantum composition is known as {\it self duality}, which also holds true for a class of elementary GPT models \cite{Janotta11,Muller12,Banik19}. The $\overline{SEP}$ theory is obtained by interchanging the role of state cone and effect cone of $SEP$ theory (see Fig. \ref{fig1}) and hence allows more exotic states than quantum states, {\it e.g.} the {\it positive on pure tensors} states \cite{Klay87}.
In between $SEP$ and $\overline{SEP}$ one can consider many other compositions, either by adding a subset of non-separable states $\delta_{NP}$ with the state cone of $SEP$ or by excluding a subset of non-separable states $\xi_{NP}$ from the state cone of $\overline{SEP}$. The resulting theories we will denote as $SEP+\delta_{NP}$ and $\overline{SEP}-\xi_{NP}$. Excluding $SEP$, many of these theories are at par with quantum theory in the sense of containing nonlocal correlations. The result of Ref.  \cite{Barnum10} implies that one cannot obtain any space-like correlation in any of these models that is not available in quantum theory. More precisely, any input-output space-like correlation obtained in any of the above compositions between two qudits is, indeed, achievable in $\mathbb{C}^d\otimes_{\mathcal{Q}}\mathbb{C}^d$. In the following, we will show that such a thesis is not true anymore if we consider correlations in the time-like domain. Formally, space-like correlations represent joint input-output conditional probability distributions arising from local measurements performed on spatially separated composite systems where no communication is allowed between the subsystems. On the other hand, time-like scenario allows  communication from one subsystem to the other (see Fig.\ref{fig2}). While the  study of Bell nonlocality has motivated a vast literature in the former scenario \cite{Brunner14}, the latter avenue has been comparatively less explored. In the following, we introduce a game that helps to grasp the strength of correlations in the time-like domain.

{\it Pairwise distinguishability game $\mathcal{P}_D^{[n]}$.--} The game involves two players (say) Alice and Bob and a Referee. In each run of the game, the Referee provides a classical message $\eta$ to Alice, randomly chosen from some set of messages $\mathcal{N}$, where $|\mathcal{N}|:=n$; and asks Bob a question $\mathbb{Q}(\eta,\eta^\prime)$ -whether the message given to Alice in that run is $\eta\in\mathcal{N}$ or $\eta^\prime\in\mathcal{N}$, where $\eta^\prime\neq\eta$. Since $\eta^\prime\neq \eta$, $n \choose 2$ number of such different questions are possible. The winning condition of the game demands Bob answer all the questions correctly. Alice and Bob do not share any correlated state, but Alice can encode her message on the states of some physical system and accordingly send them to Bob. The following results provide a necessary and sufficient condition for winning the game in any GPT. 
\begin{proposition}\label{prop1}
Perfect winning of the game $\mathcal{P}_D^{[n]}$ requires Alice to encode her message $\eta\in\mathcal{N}$ on a set of states $\{\omega_\eta\}_{\eta\in\mathcal{N}}\subset\Omega$ of some system $\mathcal{S}\equiv(\Omega,E)$ such that the states within the set $\{\omega_\eta\}_{\eta\in\mathcal{N}}$ are pairwise distinguishable.  
\end{proposition}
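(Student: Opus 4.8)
The plan is to translate the perfect-winning condition directly into the defining relations of pairwise distinguishability. First I would fix the most general form of a strategy in the GPT framework: on receiving $\eta\in\mathcal{N}$, Alice's (possibly randomised) encoding produces, after averaging over her local randomness, a well-defined normalised state $\omega_\eta\in\Omega$ of the communicated system $\mathcal{S}$ --- this average is itself a legitimate state because $\Omega$ is convex; on receiving the question $\mathbb{Q}(\eta,\eta')$, i.e.\ the unordered pair $\{\eta,\eta'\}$, Bob applies a two-outcome measurement $M_{\{\eta,\eta'\}}=\{e^{\{\eta,\eta'\}}_{\eta},e^{\{\eta,\eta'\}}_{\eta'}\}$ with $e^{\{\eta,\eta'\}}_{\eta}+e^{\{\eta,\eta'\}}_{\eta'}=u$, reporting the message as $\eta$ on the first outcome and as $\eta'$ on the second. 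A measurement with more outcomes can always be coarse-grained to this form, and shared randomness between the players is just a convex mixture of such deterministic read-outs, so nothing is lost by this normal form.

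Next I would spell out what \emph{perfect winning} forces. For the question $\mathbb{Q}(\eta,\eta')$ the true message is either $\eta$ or $\eta'$ and Bob must be right in both cases, hence $e^{\{\eta,\eta'\}}_{\eta}(\omega_\eta)=1$ and $e^{\{\eta,\eta'\}}_{\eta'}(\omega_{\eta'})=1$. Combining these with the normalisation $e^{\{\eta,\eta'\}}_{\eta}+e^{\{\eta,\eta'\}}_{\eta'}=u$ and $u(\omega)=1$ for all $\omega\in\Omega$, they are equivalent to $e^{\{\eta,\eta'\}}_{\eta}(\omega_{\eta'})=0$ and $e^{\{\eta,\eta'\}}_{\eta'}(\omega_{\eta})=0$, i.e.\ to $e^{\{\eta,\eta'\}}_{\mu}(\omega_{\nu})=\delta_{\mu\nu}$ for all $\mu,\nu\in\{\eta,\eta'\}$, which is precisely the statement that $\{\omega_\eta,\omega_{\eta'}\}$ is perfectly distinguishable by $M_{\{\eta,\eta'\}}$. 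Since the referee may pose any of the $\binom{n}{2}$ questions, the same conclusion holds for every pair $\{\eta,\eta'\}\subset\mathcal{N}$, with (in general) a question-dependent measurement; therefore $\{\omega_\eta\}_{\eta\in\mathcal{N}}$ is pairwise distinguishable, which is the assertion.

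The only point needing care --- and the place I would be most careful --- is the reduction of a genuinely probabilistic strategy to this clean picture: one must verify that if Alice mixes over encodings $\{\omega^{(k)}_\eta\}$ with weights $\{p^{(k)}_\eta\}$ and/or the players share a random variable $\lambda$ with distribution $\{q_\lambda\}$, then an overall success probability equal to $1$ already forces $e^{\lambda}_{\eta}(\omega^{(k)}_\eta)=1$ for the relevant question and for every $k,\lambda$ in the respective supports. This follows from linearity of effects together with the elementary fact that a convex combination of numbers in $[0,1]$ equals $1$ only when every term of nonzero weight equals $1$; one may then either condition on a single pair $(\lambda,k)$ or simply retain the averaged state $\omega_\eta$, both of which satisfy the distinguishing relations above. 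I would finish by remarking that the converse is equally immediate --- given pairwise distinguishable $\{\omega_\eta\}$, Alice sends $\omega_\eta$ and Bob answers $\mathbb{Q}(\eta,\eta')$ using the measurement that distinguishes that very pair --- so the stated condition is in fact both necessary and sufficient for perfect success of $\mathcal{P}_D^{[n]}$.
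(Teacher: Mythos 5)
Your proposal is correct and follows essentially the same route as the paper's own argument: translate the perfect-winning condition for each question $\mathbb{Q}(\eta,\eta')$ into the existence of a (pair-dependent) two-outcome measurement satisfying $e_\mu(\omega_\nu)=\delta_{\mu\nu}$, which is exactly pairwise distinguishability and hence the bound on the information dimension. Your treatment is simply more explicit than the paper's informal proof (the coarse-graining to two outcomes, the convexity argument eliminating randomisation, and the converse), with the only minor caveat that when Alice's encoding also depends on the shared randomness one should use your ``condition on a single $\lambda$'' option rather than the averaged state.
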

Formal proof of the proposition we defer to supplemental \cite{Supple}. 
Here, we would like to point out that the concept of information dimension of the system used by Alice to encode her messages plays a crucial role in this game since in each run, the question $\mathbb{Q}(\eta,\eta^\prime)$ asked to Bob will be a function of two messages one of which has been provided to Alice. 

We will now consider the situation where Alice encodes her messages on the states of multiple qubits available to her. However, depending upon the composite structures assumed to model these multiple qubits, different number of qubits may be required to win the same game, which leads us to one of our core results. 
\begin{theorem}\label{theo1}
Four qubits communication from Alice to Bob is required for winning the game $\mathcal{P}_D^{[12]}$ when quantum composition is considered among the elementary systems, whereas two $SEP$-bits ({\it i.e.} two qubits in $SEP$ composition) suffice for winning this game. 
\end{theorem}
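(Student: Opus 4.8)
The plan is to use Proposition~\ref{prop1} to reduce both halves of the statement to a question about the maximal number of pairwise perfectly distinguishable states (the \emph{information dimension}) of the composite, and then treat the quantum side (a lower bound) and the $SEP$ side (an explicit construction) in turn. On the quantum side, $n$ qubits composed quantumly is an ordinary quantum system on $\mathbb{C}^{2^n}$; two quantum states are perfectly distinguishable iff their supports are orthogonal (from $\Tr(M\rho)=1$, $\Tr(M\sigma)=0$, $0\le M\le\mathbf{1}$ one reads off that $M$ fixes $\mathrm{supp}\,\rho$ and kills $\mathrm{supp}\,\sigma$), and pairwise orthogonal subspaces form an orthogonal direct sum, so a $D$-dimensional quantum system has information dimension exactly $D$. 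Hence by Proposition~\ref{prop1}, winning $\mathcal{P}_D^{[12]}$ needs $2^n\ge 12$, i.e.\ $n\ge 4$, while $n=4$ already works ($16\ge12$, encode on $12$ of the computational basis states). This is the routine half.

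The content is on the $SEP$ side. An effect of $\mathbb{C}^2\otimes_{SEP}\mathbb{C}^2$ is a Hermitian $E$ with $E$ and $\mathbf{1}-E$ both block positive, i.e.\ $0\le\langle\psi\otimes\phi|E|\psi\otimes\phi\rangle\le1$ for all product vectors; crucially this includes genuine entanglement witnesses, and it is precisely this enlargement of the effect set that powers the construction. I would first establish a discrimination criterion for product states: $|\alpha\rangle|\beta\rangle$ and $|\alpha'\rangle|\beta'\rangle$ are perfectly $SEP$-distinguishable iff $|\langle\alpha|\alpha'\rangle|^2+|\langle\beta|\beta'\rangle|^2\le1$, equivalently, in Bloch-vector terms, iff $\vec{u}\cdot\vec{u}'+\vec{v}\cdot\vec{v}'\le0$. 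For necessity: using local unitaries to put $\alpha=\beta=|0\rangle$, so that $\alpha'=\cos\theta|0\rangle+\sin\theta|1\rangle$ and $\beta'=\cos\mu|0\rangle+\sin\mu|1\rangle$, block positivity of $E$ with the first tensor factor frozen at $\alpha'$ forces $(\langle\alpha'|\otimes\mathbf{1})E(|\alpha'\rangle\otimes\mathbf{1})$ to be a positive operator $\le\mathbf{1}$ that annihilates $\beta'$, hence of the form $\lambda\,|\beta'^{\perp}\rangle\langle\beta'^{\perp}|$ with $\lambda\le1$; comparing the $\langle0|\cdot|0\rangle$ entry of this operator with the value forced by $\langle00|E|00\rangle=1$ gives $\lambda\ge\cos^2\theta/\sin^2\mu$, and $\lambda\le1$ is exactly $\cos^2\theta+\cos^2\mu\le1$. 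Sufficiency is needed only for the two kinds of pair arising below: a pair orthogonal on one factor (distinguished by a local measurement), and a pair with $|\langle\alpha|\alpha'\rangle|^2=|\langle\beta|\beta'\rangle|^2=\tfrac12$, which up to local unitaries is $|0\rangle|0\rangle$ versus $|+\rangle|+\rangle$ and is handled by the explicit $SEP$ effect $E=\tfrac12(\mathbf{1}+Z\otimes Z-X\otimes X)$: here $E(|\psi\phi\rangle)=\tfrac12(1+u_zv_z-u_xv_x)$ lies in $[0,1]$ by Cauchy--Schwarz on the Bloch vectors, and equals $1$ at $|00\rangle$ and $0$ at $|++\rangle$.

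With the criterion in hand I would exhibit the twelve states: take $\{|\psi_k\rangle\}_{k=1}^{6}$ to be the eigenstates of $X,Y,Z$ (the vertices of the Bloch-sphere octahedron) and form the six ``aligned'' states $|\psi_k\rangle|\psi_k\rangle$ together with the six ``anti-aligned'' states $|\psi_k\rangle|\psi_k^{\perp}\rangle$. Because distinct octahedron unit vectors have dot product $0$ or $-1$, a direct check shows $\vec{u}\cdot\vec{u}'+\vec{v}\cdot\vec{v}'\le0$ for every one of the $\binom{12}{2}$ pairs — each pair being either orthogonal on one factor or an instance of the extremal case $|\langle\alpha|\alpha'\rangle|^2=|\langle\beta|\beta'\rangle|^2=\tfrac12$ — so the twelve separable states are pairwise $SEP$-distinguishable. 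By Proposition~\ref{prop1}, Alice wins $\mathcal{P}_D^{[12]}$ by encoding $\eta$ into them and transmitting the two $SEP$-composed qubits, Bob answering each $\mathbb{Q}(\eta,\eta')$ with the corresponding two-outcome $SEP$ measurement; hence two $SEP$-bits suffice.

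I expect the main obstacle to be the $SEP$ half, and within it: recognizing that entanglement-witness effects perfectly discriminate product states whose factors are pairwise \emph{non}-orthogonal, pinning down the exact threshold $|\langle\alpha|\alpha'\rangle|^2+|\langle\beta|\beta'\rangle|^2\le1$, and then finding twelve states saturating it. The last step is delicate because $\vec{u}\cdot\vec{u}'+\vec{v}\cdot\vec{v}'\le0$ says the rescaled vectors $\tfrac{1}{\sqrt2}(\vec{u},\vec{v})\in\mathbb{R}^6$ must be pairwise obtuse — whose abstract maximum is $2\times6=12$ — yet these vectors are confined to the ``$S^2\times S^2$'' locus, and saturating the bound there forces the anti-aligned family, the aligned octahedron states by themselves giving only six. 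The necessity half of the criterion is not strictly required for the theorem, but it certifies that the construction is optimal inside $SEP$ and is the natural tool for comparing with the intermediate compositions $SEP+\delta_{NP}$ and $\overline{SEP}-\xi_{NP}$.
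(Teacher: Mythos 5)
Your proposal is correct and follows essentially the same route as the paper: the quantum lower bound via Proposition~\ref{prop1} together with information dimension $=$ Hilbert-space dimension, and for the $SEP$ upper bound the very same twelve Pauli-eigenstate product states $\{\ket{\kappa\kappa},\ket{\kappa\bar\kappa},\ket{\bar\kappa\kappa},\ket{\bar\kappa\bar\kappa}\}_{\kappa\in\{x,y,z\}}$, distinguished pairwise by entanglement-witness effects of the form $\tfrac12(\mathbf{1}\pm Z\otimes Z\mp X\otimes X)$ conjugated by local unitaries. The only cosmetic difference is that you rederive the Bloch-vector criterion $\vec u\cdot\vec u'+\vec v\cdot\vec v'\le 0$ (with a correct necessity sketch), whereas the paper simply invokes Arai \emph{et al.} \cite{Arai19} and tabulates the explicit local unitaries.
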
    
\begin{proof}
(outline) For a quantum system, the information dimension is the same as its operational dimension, which is again the same as the dimension of the associated Hilbert space \cite{Brunner14(1)}. Since the Hilbert space dimension of $(\mathbb{C}^2)^{\otimes3}$ is $8$, according to Proposition \ref{prop1}, three qubits communication is not sufficient for winning the game $\mathcal{P}_D^{[12]}$ perfectly. However, four qubits communication suffices as the number of distinguishable (as well as pairwise distinguishable) states, in this case, are $16$. If we consider the $SEP$ composition between two qubits, then the following $12$ states $\mathcal{A}:=\{\ket{\kappa\kappa},\ket{\kappa\bar{\kappa}},\ket{\bar{\kappa}\kappa},\ket{\bar{\kappa}\bar{\kappa}}\}_{\kappa\in\{x,y,z\}}$ turn out to be pairwise distinguishable; where $\ket{\alpha\beta}:=\ket{\alpha}\otimes\ket{\beta}$ and $\ket{\kappa}~(\ket{\bar{\kappa}})$ is the eigenstate of Pauli operator $\sigma_\kappa$ with eigenvalue $+1~(-1)$, where $\kappa\in\{x,y,z\}$. While some pair of states, such as $\{\ket{xx},\ket{x\bar{x}}\}$, are perfectly distinguishable in quantum theory due to mutual orthogonality, some pairs, such as $\{\ket{xx},\ket{zz}\}$, consisting non-orthogonal states cannot be perfectly distinguished in quantum theory. However, as shown by Arai {\it et al.} such states can be perfectly distinguished in $SEP$ theory \cite{Arai19}. Complete analysis (along with the measurement) of pairwise distinguishability of the states $\mathcal{A}$ in $SEP$ theory is presented in supplemental \cite{Supple}. The game $\mathcal{P}_D^{[12]}$, thus, can be perfectly won in SEP theory if Alice encodes her message on these states. This completes the proof of our claim.
\end{proof}    
Theorem \ref{theo1} establishes communication advantage of $SEP$ composition over the other two compositions $\mathcal{Q}$ and $\overline{SEP}$, even though it is a local theory by construction. This theorem also implies a curious feature of the SEP theory, known as `dimension mismatch' - difference between measurement dimension and information dimension \cite{Brunner14(1)}.
\begin{corollary}\label{coro1}
SEP composition exhibits the phenomenon of dimension mismatch.
\end{corollary}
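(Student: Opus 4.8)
\emph{Proof proposal.} Recall that for any GPT system the \emph{measurement} (operational) dimension is the largest number of states perfectly discriminable by a \emph{single} measurement, while the \emph{information} dimension is the largest cardinality of a set of states required only to be \emph{pairwise} distinguishable; for every quantum system these two numbers coincide. The plan is to prove that they differ for the system of two $SEP$-bits. Half of this is already in hand: by Theorem~\ref{theo1} together with Proposition~\ref{prop1}, the twelve states $\mathcal{A}:=\{\ket{\kappa\kappa},\ket{\kappa\bar{\kappa}},\ket{\bar{\kappa}\kappa},\ket{\bar{\kappa}\bar{\kappa}}\}_{\kappa\in\{x,y,z\}}$ are pairwise distinguishable in the $SEP$ composition, so the information dimension of two $SEP$-bits is at least $12$. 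It therefore remains to show that the measurement dimension of two $SEP$-bits equals $4$, the same as that of a single ququart.

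The lower bound is trivial (a product-basis measurement). For the upper bound I would take an arbitrary $SEP$ measurement $\{e_i\}$ (each $e_i$ block positive, $\sum_i e_i$ equal to the unit effect, i.e.\ the trace functional) that perfectly discriminates separable states $\omega_1,\dots,\omega_k$, and first reduce to pure product states: since every $e_i$ is nonnegative on the separable cone and takes value at most $1$ on every normalized product state, expanding $\omega_j$ as a convex mixture of pure product states forces each $e_i$ ($i\neq j$) to vanish on each product component of $\omega_j$ and forces $e_j$ to equal $1$ on each such component; selecting one component $\ket{\psi_j}=\ket{\alpha_j}\otimes\ket{\beta_j}$ per $j$ yields pure product states discriminated by the same measurement, with the same $k$.

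The heart of the matter is then to show $k\le 4$. The key structural fact is that if $\langle\alpha\beta|e_i|\alpha\beta\rangle=0$ then this nonnegative quadratic form is stationary there, and the first-order conditions in the $\alpha$- and $\beta$-directions give $e_i\ket{\alpha\beta}\parallel\ket{\alpha^{\perp}}\otimes\ket{\beta^{\perp}}$. Each $e_i$ ($i\ge1$) has the $k-1$ product-zeros $\{\ket{\psi_j}\}_{j\neq i}$ while $\langle\psi_i|e_i|\psi_i\rangle=1$; one route is to feed these constraints, together with $\sum_i e_i=\mathbf{1}$, into a one- or two-parameter real family of product vectors and exhibit a negative value once $k\ge5$, contradicting block positivity. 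An equivalent route decomposes each two-qubit witness as $e_i=A_i+B_i^{\Gamma}$ with $A_i,B_i\ge0$ (all $2\times2$ witnesses are decomposable), notes that the partial transposes $\omega_j^{\Gamma}$ of the \emph{separable} $\omega_j$ are again genuine density operators, and uses that the vanishing conditions split as $A_i\ket{\psi_j}=0$ and $B_i\ket{\psi_j^{\Gamma}}=0$ for $i\neq j$ while $\sum_i A_i+(\sum_i B_i)^{\Gamma}=\mathbf{1}$; in the case where all $A_i$ (or all $B_i$) vanish this literally produces an ordinary four-outcome-bounded quantum discrimination of the $\omega_j^{\Gamma}$ (resp.\ $\omega_j$), giving $k\le4$ directly, and the mixed case is handled by the same stationarity bookkeeping. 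Granting $k\le4$, the measurement dimension of two $SEP$-bits is $4<12\le$ information dimension, so the two dimensions are unequal; this is precisely the dimension mismatch, and it shows that the extra power of exotic $SEP$ effects surfaces only in the adaptive (pairwise) discrimination task, never in single-shot discrimination.

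The step I expect to be the main obstacle is the last one — ruling out $k\ge5$ for \emph{every} configuration of the $\ket{\psi_j}$. For product-zeros in general position both arguments above close cleanly by a short finite computation, but degenerate configurations (aligned product-zeros, $\ket{\psi_j}$ spanning a proper subspace, or effects for which $A_i$ or $B_i$ vanishes) need a careful case analysis to certify that no clever arrangement beats four perfectly discriminable states under a single $SEP$ measurement.
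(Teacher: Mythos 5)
Your overall structure matches the paper's: establish that the information dimension of two $SEP$-bits is at least $12$ (via Theorem~\ref{theo1}/Proposition~\ref{prop1}) and that the measurement dimension is only $4$, and conclude the mismatch. The difference is in how the second ingredient is obtained. The paper does not prove it at all: it simply invokes Proposition~2.5 of Arai \emph{et al.}~\cite{Arai19}, which states that the operational (measurement) dimension of $\mathbb{C}^2\otimes_{\min}\mathbb{C}^2$ is $4$. You instead set out to prove this claim yourself, and this is where your argument has a genuine gap: after the (correct) reduction to pure product states, your two sketched routes --- the first-order stationarity bookkeeping and the decomposition $e_i=A_i+B_i^{\Gamma}$ --- are never closed, and you concede explicitly that ruling out $k\ge 5$ for degenerate configurations of the product-zeros is left open. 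As written, the corollary therefore rests on an unproven lemma, so the proof is incomplete unless you are willing to cite the known result, as the paper does.

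If you do want a self-contained argument, note that the missing bound follows from a much simpler observation than either of your routes. Suppose a single $SEP$ measurement $\{e_i\}_{i=1}^k$ (each $e_i$ block positive, $\sum_i e_i=\mathbf{1}$) perfectly discriminates separable states $\omega_1,\dots,\omega_k$; by your own convexity reduction one may pick a pure product vector $\ket{\psi_i}=\ket{\alpha_i}\otimes\ket{\beta_i}$ with $\bra{\psi_i}e_i\ket{\psi_i}=1$. Expanding $\Tr(e_i)$ in the product orthonormal basis $\{\ket{\alpha_i\beta_i},\ket{\alpha_i\beta_i^{\perp}},\ket{\alpha_i^{\perp}\beta_i},\ket{\alpha_i^{\perp}\beta_i^{\perp}}\}$, every diagonal entry is nonnegative by block positivity, so $\Tr(e_i)\ge\bra{\psi_i}e_i\ket{\psi_i}=1$. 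Summing over $i$ and using $\sum_i\Tr(e_i)=\Tr(\mathbf{1})=4$ gives $k\le 4$, with no case analysis on the geometry of the $\ket{\psi_j}$ and no appeal to decomposability of two-qubit witnesses. With that step supplied (or with the citation to \cite{Arai19} substituted for it), the rest of your argument is correct and coincides with the paper's.
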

\begin{proof}
Measurement dimension of two SEP-bits is $4$, which follows from Proposition 2.5 of Ref.\cite{Arai19}, whereas our Theorem \ref{theo1} establishes its information dimension to be strictly greater than $4$, and completes the proof.
\end{proof}
Dimension mismatch and consequently the presence of stronger than quantum time-like correlation in the above result arises strictly from the choice of composition. One might ask whether the advantage of two qubits in $SEP$ composition for playing the $\mathcal{P}_D^{[n]}$ game can be made arbitrarily large. Our next result is a no-go answer to this question (proof provided in supplemental \cite{Supple}).   
\begin{lemma}\label{lemma1}
The game $\mathcal{P}_D^{[n]}$ cannot be won perfectly by communicating the encoded states chosen from the composite system $\mathbb{C}^2{\otimes}_{\min}\mathbb{C}^2$ whenever $n>12$. 
\end{lemma}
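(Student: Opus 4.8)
\emph{Proof strategy.} By Proposition~\ref{prop1}, winning $\mathcal{P}_D^{[n]}$ perfectly while communicating states of $\mathbb{C}^2{\otimes}_{\min}\mathbb{C}^2$ forces Alice to encode her $n$ messages on $n$ pairwise distinguishable states of this theory, so it suffices to show that the information dimension of $\mathbb{C}^2{\otimes}_{\min}\mathbb{C}^2$ is at most $12$. First I would reduce this to a statement about pure product states. Given pairwise distinguishable separable states $\rho_1,\dots,\rho_n$, fix for each $i$ a decomposition $\rho_i=\sum_k p_k^{(i)}\,|e_k^{(i)}\rangle\langle e_k^{(i)}|$ into normalized product vectors with every $p_k^{(i)}>0$, and for each pair $i\neq j$ an effect $E_{ij}$ of the theory (so that both $E_{ij}$ and $\mathbf{1}-E_{ij}$ are nonnegative on all product vectors) with $\Tr(E_{ij}\rho_i)=1$ and $\Tr(E_{ij}\rho_j)=0$. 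Each $\langle e_k^{(i)}|E_{ij}|e_k^{(i)}\rangle$ then lies in $[0,1]$ and their $p_k^{(i)}$-weighted sum is $1$, forcing $\langle e_k^{(i)}|E_{ij}|e_k^{(i)}\rangle=1$ for all $k$, and likewise $\langle e_l^{(j)}|E_{ij}|e_l^{(j)}\rangle=0$ for all $l$. Choosing one representative vector $|f_i\rangle:=|e_1^{(i)}\rangle$ per state, the pure product states $|f_i\rangle\langle f_i|$ are mutually distinct (a shared vector would make some $\langle f_i|E_{ij}|f_i\rangle$ equal to both $0$ and $1$) and are still pairwise distinguished by the same effects $E_{ij}$. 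Hence it is enough to bound the size of a pairwise distinguishable family of pure product states.

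The second and central step is the two-state criterion. Writing $\vec\alpha,\vec\beta\in\mathbb{R}^3$ for the Bloch vectors of the two qubit factors of a pure product state $|ab\rangle$, I would prove that two distinct such states $|ab\rangle$ and $|cd\rangle$ (with Bloch data $\vec\alpha,\vec\beta$ and $\vec\gamma,\vec\delta$) are perfectly distinguishable in $\mathbb{C}^2{\otimes}_{\min}\mathbb{C}^2$ only if $\vec\alpha\cdot\vec\gamma+\vec\beta\cdot\vec\delta\le 0$, equivalently $|\langle a|c\rangle|^2+|\langle b|d\rangle|^2\le1$; combined with the sufficiency shown by Arai {\it et al.}~\cite{Arai19} for the configurations they treat, this is a sharp characterization. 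For the necessity I would argue variationally: if an effect $E$ (with $E$ and $\mathbf{1}-E$ nonnegative on product vectors) obeys $\langle cd|E|cd\rangle=0$, then $|cd\rangle$ globally minimizes the nonnegative function $|fg\rangle\mapsto\langle fg|E|fg\rangle$, and the first-order conditions force $E|cd\rangle\propto|c^\perp\rangle\otimes|d^\perp\rangle$; dually $(\mathbf{1}-E)|ab\rangle\propto|a^\perp\rangle\otimes|b^\perp\rangle$. Inserting these into the Hermiticity relation for $E$ together with the bounds $0\le\langle fg|E|fg\rangle\le1$ evaluated along a suitable one-parameter family of product vectors interpolating between $|ab\rangle$ and $|cd\rangle$ should yield the displayed inequality. (Equivalently one can dualize the minimum-error discrimination problem over the separable, i.e.\ PPT, cone of $\mathbb{C}^2\otimes\mathbb{C}^2$: perfect distinguishability of $|ab\rangle\langle ab|$ and $|cd\rangle\langle cd|$ amounts to $\min\{\Tr Z\mid Z\text{ and }Z+|ab\rangle\langle ab|-|cd\rangle\langle cd|\text{ both separable}\}=1$, and whenever $\vec\alpha\cdot\vec\gamma+\vec\beta\cdot\vec\delta>0$ one exhibits an explicit separable $Z$ of trace strictly below $1$.) I expect this step to be the main obstacle, since it is precisely where the non-self-dual structure of $SEP$, as opposed to the self-dual $\mathcal{Q}$, comes into play.

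The last step is elementary linear algebra. To the $n$ pairwise distinguishable pure product states obtained above assign the vectors $w_i:=(\vec\alpha_i,\vec\beta_i)\in\mathbb{R}^6$; these are nonzero (each of squared norm $2$), pairwise distinct, and by the criterion satisfy $\langle w_i,w_j\rangle=\vec\alpha_i\cdot\vec\alpha_j+\vec\beta_i\cdot\vec\beta_j\le 0$ for all $i\neq j$. A standard fact then finishes the proof: any collection of nonzero vectors in $\mathbb{R}^d$ with pairwise nonpositive inner products has at most $2d$ elements. Indeed, projecting $w_1,\dots,w_{n-1}$ onto the hyperplane orthogonal to $w_n$ can only decrease their mutual inner products (since $\langle w_i,w_n\rangle\langle w_j,w_n\rangle\ge 0$, the term subtracted from $\langle w_i,w_j\rangle$ upon projection is nonnegative), and at most one of these projections can vanish (two vanishing projections would force two of the $w_i$ to be negative multiples of $w_n$, hence to have a strictly positive mutual inner product); induction on $d$ then gives $n\le 2d$. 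With $d=6$ this yields $n\le 12$, so by Proposition~\ref{prop1} the game $\mathcal{P}_D^{[n]}$ cannot be won perfectly for $n>12$, while Theorem~\ref{theo1} shows $n=12$ is attainable, so the bound is tight.
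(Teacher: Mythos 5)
You follow essentially the same route as the paper: reduce to pure product states, invoke the two-state criterion $|\langle a|c\rangle|^2+|\langle b|d\rangle|^2\le 1$, rewrite it in Bloch form as a nonpositive inner product in $\mathbb{R}^3\oplus\mathbb{R}^3\cong\mathbb{R}^6$, and bound the number of admissible vectors by $2d=12$. The one place where you diverge is the step you yourself flag as the main obstacle, and there the divergence is unnecessary: the result of Arai, Yoshida and Hayashi cited by the paper is a full necessary-and-sufficient characterization (two pure product states are perfectly distinguishable in $\mathbb{C}^2\otimes_{\min}\mathbb{C}^2$ if and only if $\tr\rho_1^A\rho_2^A+\tr\rho_1^B\rho_2^B\le 1$), not merely a sufficiency statement for special configurations, so the necessity direction you need can simply be quoted, which is exactly what the paper does. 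Your variational sketch of necessity is on the right track --- the positivity/first-order argument does force $E\ket{cd}\propto\ket{c^\perp}\otimes\ket{d^\perp}$ and $(\mathbf{1}-E)\ket{ab}\propto\ket{a^\perp}\otimes\ket{b^\perp}$ --- but the final passage from these facts to the inequality is left at the level of ``should yield'' and would have to be completed if you insist on a self-contained derivation. On the other two steps you are more careful than the paper: the explicit reduction from mixed separable encodings to pairwise-distinguishable pure product representatives is not spelled out there (the paper works with pure product states from the outset), and your projection-plus-induction argument that at most $2d$ nonzero vectors in $\mathbb{R}^d$ can have pairwise nonpositive inner products is the standard rigorous proof, whereas the paper only gestures at an induction via $\mathbb{R}^{d-1}\oplus\mathbb{R}$. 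So the proposal is structurally the paper's proof, complete once the Arai et al.\ characterization is used for the central criterion, and slightly tighter in its bookkeeping at both ends.
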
 
However the advantage of $SEP$ composition over quantum theory can be increased if we start with more number of $SEP$-bits initially. 
\begin{theorem}\label{theo2}
$2k$ number of $SEP$-bits are sufficient for winning the game $\mathcal{P}_D^{[12^k]}$ perfectly, whereas it requires $2k+\lceil{k\log_23}\rceil$ number of qubits, with $k\in\mathbb{Z}_+$. 
\end{theorem}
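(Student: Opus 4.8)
\emph{Proof proposal.} The plan is to prove the two halves of the statement separately, in each case invoking Proposition~\ref{prop1} together with its evident converse (the obvious sufficiency direction): the game $\mathcal{P}_D^{[n]}$ is perfectly winnable precisely when the system Alice uses to encode her messages admits $n$ pairwise distinguishable states.

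First the quantum cost. For a quantum system the information dimension equals the operational dimension, which equals the Hilbert-space dimension \cite{Brunner14(1)}; hence $m$ qubits carry exactly $2^m$ pairwise distinguishable states. By the converse of Proposition~\ref{prop1}, winning $\mathcal{P}_D^{[12^k]}$ with $m$ qubits forces $2^m\ge 12^k$, i.e.\ $m\ge k\log_2 12 = 2k+k\log_2 3$. Since $m\in\mathbb{Z}_+$ and $k\log_2 3$ is irrational for $k\ge 1$, this is equivalent to $m\ge 2k+\lceil k\log_2 3\rceil$. Conversely, for $m=2k+\lceil k\log_2 3\rceil$ one has $2^m\ge 12^k$, so there are $12^k$ mutually orthogonal (\emph{a fortiori} pairwise distinguishable) states and the game is won. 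This settles the quantum side.

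Next the $SEP$ side. Using associativity of the minimal tensor product, write $(\mathbb{C}^2)^{\otimes_{\min}2k}=(\mathbb{C}^2\otimes_{\min}\mathbb{C}^2)^{\otimes_{\min}k}$ and regard the $2k$ qubits as $k$ ordered $SEP$-bit pairs. By Theorem~\ref{theo1}, the $i$-th pair admits the $12$-element set $\mathcal{A}_i=\{\ket{\kappa\kappa},\ket{\kappa\bar{\kappa}},\ket{\bar{\kappa}\kappa},\ket{\bar{\kappa}\bar{\kappa}}\}_{\kappa\in\{x,y,z\}}$ of pairwise distinguishable states, and for any prescribed pair from $\mathcal{A}_i$ there is an explicit two-outcome $SEP$-measurement distinguishing it (the construction of Arai \emph{et al.} \cite{Arai19}, detailed in \cite{Supple}). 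Consider the $12^k$ product states $\omega_{\vec\eta}:=\bigotimes_{i=1}^{k}\omega_{\eta_i}$ with $\omega_{\eta_i}\in\mathcal{A}_i$; each is a fully product state of $2k$ qubits, hence a legitimate state of the $\otimes_{\min}$ composite. Given distinct $\omega_{\vec\eta}\ne\omega_{\vec\eta^{\,\prime}}$, pick a coordinate $j$ with $\eta_j\ne\eta_j'$, let $\{e_j,u_j-e_j\}$ be the single-pair measurement discriminating $\omega_{\eta_j}$ from $\omega_{\eta_j'}$, and take the global two-outcome measurement $\bigl\{\,e_j\otimes\bigotimes_{i\ne j}u_i,\ (u_j-e_j)\otimes\bigotimes_{i\ne j}u_i\,\bigr\}$. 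Each of these effects takes values in $[0,1]$ on every generator $\pi_1\otimes\cdots\otimes\pi_{2k}$ of the $\otimes_{\min}$ state cone, hence lies in the dual effect cone, and the measurement perfectly discriminates $\omega_{\vec\eta}$ from $\omega_{\vec\eta^{\,\prime}}$. Thus $\{\omega_{\vec\eta}\}$ is a pairwise distinguishable set of cardinality $12^k$, and Proposition~\ref{prop1} gives that $2k$ $SEP$-bits win $\mathcal{P}_D^{[12^k]}$.

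The step most at risk of going wrong, and the one I would write out carefully, is the book-keeping in the last paragraph: one must verify that the effects produced by the single-pair (Arai \emph{et al.}) construction remain admissible once tensored with units and reinterpreted inside the $2k$-qubit $SEP$ theory, and that the regrouping into pairs (associativity of $\otimes_{\min}$) loses nothing. Both reduce to the single observation that the $\otimes_{\min}$ state cone is generated by fully product states, on which all the effects above are manifestly valid; everything else is the elementary dimension count already carried out. I do not anticipate any obstruction beyond this.
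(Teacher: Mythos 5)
Your proposal is correct and follows essentially the same route as the paper's own proof: build $12^k$ pairwise distinguishable states as $k$-fold products of the $12$-element set on each $SEP$-bit pair, discriminate any two by applying the Arai-type two-outcome measurement on a coordinate where they differ (padded with units), and obtain the qubit count from the fact that the information dimension of $m$ qubits is $2^m$. Your explicit treatment of the integer rounding $m\ge 2k+\lceil k\log_2 3\rceil$ and of the admissibility of the padded effects on the $\otimes_{\min}$ cone merely spells out steps the paper leaves as ``not hard to see.''
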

Proof is provided in supplemental \cite{Supple}. While deriving this theorem we use the fact that Bob addresses at-most two $Sep$-bits together while decoding Alice's message. The gap between the number of $SEP$-bits and qubits might increase further if Bob addresses all the $SEP$-bits together. However, we leave this question open for future research. We rather move to a possible experimental implication of our study.

Novel experimental proposals bring adequate physical reasoning to the `mathematical fiction' of Hilbert space formulation of quantum theory. For instance, experimentally observed algebraic relationship among the coherent cross sections of scattering amplitudes in triple-slit experiment constitutes a test for complex versus quaternion quantum theory \cite{Peres79}. The experiment by Sinha {\it et al.} that rules out multi-order interference in quantum mechanics is worth mentioning at this point \cite{Sinha10}. In a similar spirit, a pertinent question to ask is which particular composite structure between two elementary qubits must be preferred \cite{Self1,DiVincenzo00}. At this point, one might wish to postulate a particular composite structure. Schr\"{o}dinger, for instance, found Quantum composition `rather discomforting' \cite{Schrodinger35} due to the peculiarity of quantum entanglement as demonstrated in EPR gedankenexperiment \cite{Einstein35}. The $SEP$ composition, which does not contain these 'discomforting' features, is immediately ruled out due to the seminal experiment by Aspect \& collaborators \cite{Aspect81} and the recent {\it loophole} free Bell tests \cite{Hensen15,BBT18,Rauch18} which validate the presence of nonlocal correlations in nature. There are still a number of different bipartite compositions, such as $SEP+\delta_{EP},~\overline{SEP}-\xi_{EP},~\overline{SEP}$, that, like the quantum composition $\mathcal{Q}$, incorporate nonlocal correlations and hence cannot be excluded from the Bell test's results. Furthermore, no such model can contain any space-like correlation which are not available in quantum theory \cite{Barnum10}. At this point our $\mathcal{P}_D^{[12]}$ game starts playing a crucial role. Perfect success of this game with communication of less than four qubits assures the presence of beyond quantum time-like correlation which indicates a departure from the composition rule adopted in quantum theory. In fact, our next result proposes a generic test in this direction. 
\begin{proposition}\label{prop2}
For $n>2^k$, the game $\mathcal{P}_D^{[n]}$ cannot be won with $k$ qubits communication from Alice to Bob if the composition rule is quantum.  
\end{proposition}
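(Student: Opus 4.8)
The plan is to combine Proposition~\ref{prop1} with the standard fact that a quantum system with Hilbert space $\mathcal{H}$ has information dimension equal to $\dim\mathcal{H}$. First I would invoke Proposition~\ref{prop1}: perfect success of $\mathcal{P}_D^{[n]}$ forces Alice to encode her $n$ messages onto a family of states $\{\omega_\eta\}_{\eta\in\mathcal{N}}$, with $|\mathcal{N}|=n$, that are \emph{pairwise} distinguishable within the system she transmits to Bob. When the composition among the $k$ communicated qubits is the quantum one, that system is simply $(\mathbb{C}^2)^{\otimes k}\cong\mathbb{C}^{2^k}$ with normalized state space $\mathcal{D}(\mathbb{C}^{2^k})$. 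Hence the whole claim reduces to bounding the maximal cardinality of a pairwise distinguishable family of states in $\mathcal{D}(\mathbb{C}^d)$ with $d=2^k$.

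Next I would recall the elementary characterization of perfect two-state discrimination in quantum theory: two density operators $\rho,\sigma$ on $\mathcal{H}$ are perfectly distinguishable by some POVM if and only if their supports are orthogonal subspaces (equivalently $\|\rho-\sigma\|_1=2$). Applying this to every pair in $\{\omega_\eta\}_{\eta\in\mathcal{N}}$, the supports $\{\Supp(\omega_\eta)\}_{\eta\in\mathcal{N}}$ are $n$ mutually orthogonal nonzero subspaces of $\mathbb{C}^{2^k}$; picking one unit vector from each yields $n$ mutually orthogonal vectors, so $n\le 2^k$. Therefore, whenever $n>2^k$ no such quantum encoding exists, and by Proposition~\ref{prop1} the game cannot be won perfectly with $k$ qubits under quantum composition. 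This is exactly the assertion; it also re-derives from scratch the coincidence of information dimension, operational dimension and Hilbert-space dimension for quantum systems that was used in the outline of Theorem~\ref{theo1}.

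There is essentially no hard step here; the only point requiring a little care is that ``pairwise distinguishable'' permits a \emph{different} measurement for each pair $\{\eta,\eta'\}$, so one may not simply appeal to a single joint measurement with $n$ outcomes (indeed that distinction is precisely what gives $SEP$ its advantage in Theorem~\ref{theo1}). The support-orthogonality criterion sidesteps this because it is already a statement about a single pair, and a collection of pairwise orthogonal subspaces automatically spans a space of dimension at least $n$. If a fully self-contained write-up were wanted I would also include the short Helstrom-type proof of the support-orthogonality criterion; for the present purpose it suffices to cite the standard characterization of the information dimension of $\mathbb{C}^d$~\cite{Brunner14(1)}.
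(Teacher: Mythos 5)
Your argument is correct and is essentially the paper's own: the paper proves Proposition~\ref{prop2} in one line by combining Proposition~\ref{prop1} with the fact that the information dimension of a quantum system equals its Hilbert-space dimension (citing \cite{Brunner14(1)}). The only difference is that you re-derive that fact via the support-orthogonality criterion for perfect pair discrimination rather than citing it, which is a fine self-contained substitute and correctly handles the pairwise (different measurement per pair) subtlety.
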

The proof simply follows from Proposition \ref{prop1} and the fact that the information dimension of a quantum system is the same as its Hilbert space dimension. A non-null result in Proposition \ref{prop2}, {\it i.e.} successful completion of the task $\mathcal{P}_D^{[>2^k]}$ with $k$ qubits communication, will indicate a departure from quantum composition rule, whereas null result builds confidence towards quantum composition.

{\it Discussions.--} The present letter initiates a novel paradigm towards experimental tests for derivation of the composition rule from quantum mechanics. Importantly, unlike the seminal Bell tests that deal with space-like correlations, our proposal is based on time-like correlations. In this regard, the recent works in \cite{DallArno17,Czekaj17} are worth mentioning. In particular, the authors in \cite{DallArno17} provide examples of stronger than quantum time-like correlations while considering two elementary systems. However, the elementary systems considered there are post-quantum in nature (particularly the {\it square bits}), which itself can generate stronger time-like correlation than a qubit as established through our $\mathcal{P}_D^{[n]}$ task. The possibility of stronger than quantum time-like correlations in this present work, therefore, emerges strictly form the choices of composition between the qubits. From a technical point of view, the authors in \cite{DallArno17} utilize a concept called `signaling dimension' which is motivated from the study made in \cite{Frenkel15}, whereas in our case, the concept of information dimension plays a crucial role. On the other hand, the approach in \cite{Czekaj17} involves calculating entropic quantities, which requires particular structure in a theory to be well defined \cite{Krumm17,Takakura19}. Our approach, however, involves a very intuitive notion of pairwise distinguishability. A more elaborate discussion regarding novelty of our method in comparison to the existing approaches is deferred to supplemental material.   

Our study also motivates a number of questions that might be interesting for further exploration. First of all, dimension mismatch studied in \cite{Brunner14(1)} for {\it squre bit} model suggests several exotic implications. For instance, it can result in collapse of communication complexity and can also empower the Maxwell's demon indicating a violation of the second law of thermodynamics. Similar studies will be interesting in our case as our Corollary \ref{coro1} implies a dimension mismatch arising strictly from a compositional aspect of local quantum systems. It will also be interesting from a complexity theory perspective to answer the question mentioned after Theorem \ref{theo2}. Finding the implications of different composition rules are worth exploring for higher dimensional elementary quantum systems. Similar questions might also be re-framed in field theoretic formalism \cite{Efimov,Fainberg}. On the other hand, studying the implication of such stronger time-like correlations within the framework of generalized probabilistic theory might also provide fundamentally new insights regarding the structure of spacetime. The toy models of polygonal theory that have been studied extensively in the recent past \cite{Janotta11,Muller12,Banik19,Bhattacharya20,Saha20,Saha20(1)} might be a starting point towards this endeavour.   

{\bf Acknowledgement:} RKP acknowledges support through PMRF scheme. TG was supported by the Hong Kong Research Grant Council through grant 17307719 and though the Senior Research Fellowship Scheme SRFS2021-7S02, by the Croucher Foundation,  and by the John Templeton Foundation through grant  61466, The Quantum Information Structure of Spacetime  (qiss.fr). The opinions expressed in this publication are those of the authors and do not necessarily reflect the views of the John Templeton Foundation. SSB acknowledges partial support by the Foundation for Polish Science (IRAP project, ICTQT, Contract No. MAB/2018/5, co-financed by EU within Smart Growth Operational Programme). MA and MB acknowledge support through the research grant of INSPIRE Faculty fellowship from the Department of Science and Technology, Government of India. MB acknowledges funding from the National Mission in Interdisciplinary Cyber-Physical systems from the Department of Science and Technology through the I-HUB Quantum Technology Foundation (Grant No. I-HUB/PDF/2021-22/008) and the start-up research grant from SERB, Department of Science and Technology (Grant No. SRG/2021/000267).

\onecolumngrid
\section{Supplemental}
\section{Pairwise distinguishability game $\mathcal{P}_D^{[n]}$}
\begin{figure}[h!]
\centering
\includegraphics[height=6cm,width=6cm]{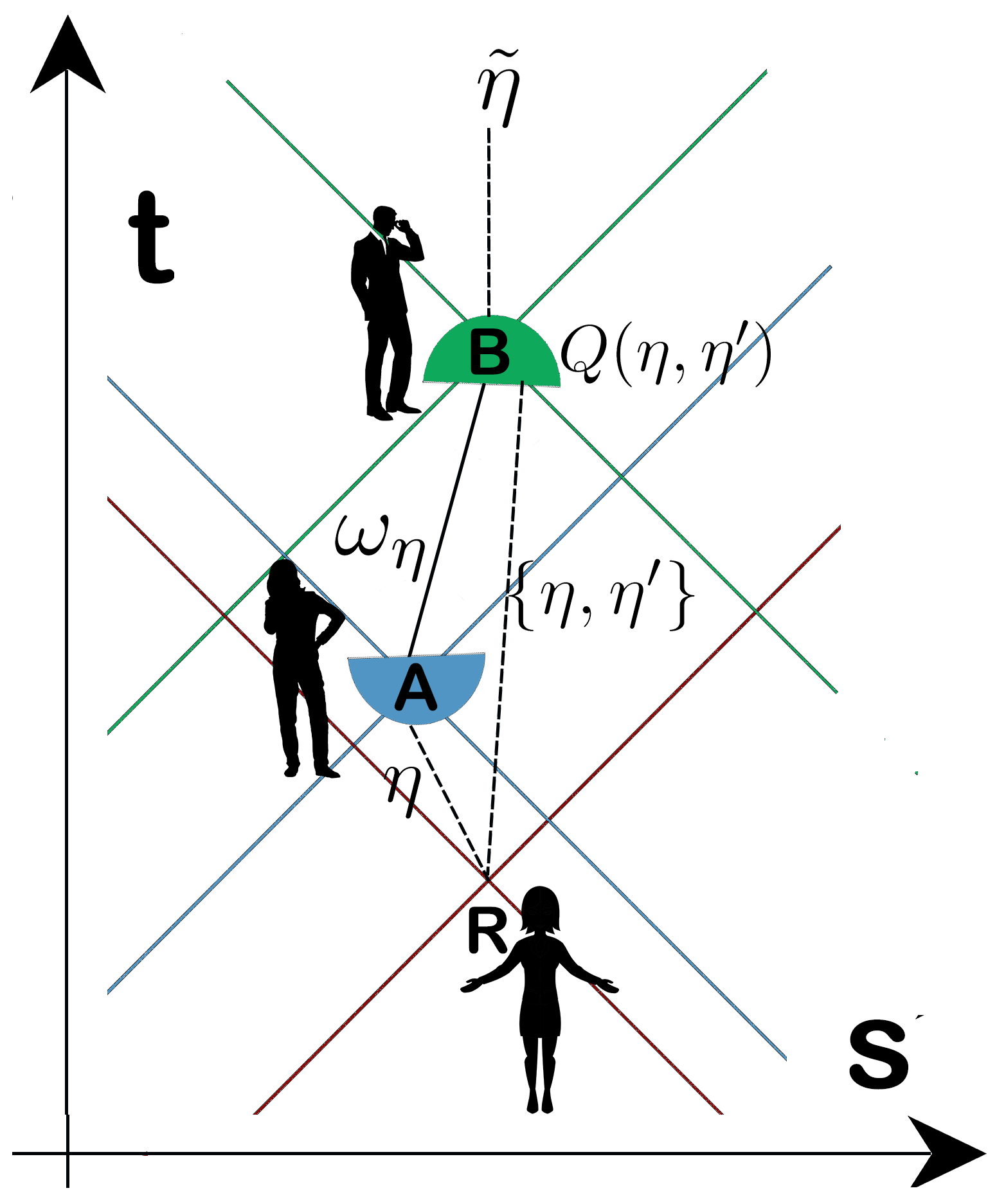}
\caption{(Color online) $\mathcal{P}_D^{[n]}$ game: in each run the Referee provides a randomly chosen message $\eta\in\mathcal{X}$ to Alice and accordingly asks Bob the question $\mathbb{Q}(\eta,\eta^\prime)$. Alice and Bob do not share any correlation, but Alice is allowed to communicate to Bob as she is in the past light cone of Bob (time is along vertical axis and horizontal axis represents space). Alice applies an encoding map $\mathcal{E}:\mathcal{X}\mapsto\{\omega_\eta\}_{\eta\in\mathcal{X}}\subset\Omega$, and sends the encoded state to Bob. Perfect winning of the game requires the states in $\{\omega_\eta\}_{\eta\in\mathcal{X}}$ to be pairwise distinguishable (Proposition 1).}\label{fig3}
\end{figure} 

\section{Proof of Proposition ${\bf 1}$}
\begin{proof}
In each run of the game $\mathcal{P}^{[n]}_D$, Referee randomly chooses a classical message $\eta$ from the set $\mathcal{N}$, where $|\mathcal{N}|:=n$; and sends it to Alice. Thereafter, Referee asks Bob the question $\mathbb{Q}(\eta,\eta^\prime)$ -- whether the message given to Alice in that run is $\eta\in\mathcal{N}$ or $\eta^\prime\in\mathcal{N}$, where $\eta^\prime\neq\eta$. However, if no communication is allowed, then Bob would choose randomly and in that case success probability would be $\frac{1}{2}$. But, if Alice is allowed to communicate then she would encode her received messages $\eta \in \mathcal{N}$ on a set of states $\{\omega_\eta\}_{\eta\in\mathcal{N}}\subset\Omega$ of some system $\mathcal{S}\equiv(\Omega,E)$. Afterward, the state is sent to Bob who already knows the encoding strategies. Upon receiving the question $\mathbb{Q}(\eta,\eta^\prime)$, he would try to distinguish the states $\omega_{\eta}$ and $\omega_{\eta^\prime}$ through some measurement. To win this game perfectly, Alice must choose pairwise distinguishable states $\{\omega_{\eta}\}_{\eta\in\mathcal{N}}$ to encode $n$ number of messages. For that the information dimension $\mathcal{I}$ of the system $\mathcal{S}$ must be lower bounded by $n$ i.e., $\mathcal{I}(\mathcal{S})\geq n$.  . \end{proof} 

\section{Detailed Proof of Theorem ${\bf 1}$}
\begin{proof}
In manuscript, we have already argued that four qubits are necessary to win $\mathcal{P}_D^{[12]}$ perfectly. Here we show that two SEP-bits suffice for winning this game. Alice starts by choosing the following encoding states  
\begin{align}
\Omega[12]:=\left\{\!\begin{aligned}
\omega_{zz}=\ket{zz}\bra{zz},~~~~\omega_{xx}=\ket{xx}\bra{xx},~~~~\omega_{yy}=\ket{yy}\bra{yy}\\
\omega_{z\bar{z}}=\ket{z\bar{z}}\bra{z\bar{z}},~~~~\omega_{x\bar{x}}=\ket{x\bar{x}}\bra{x\bar{x}},~~~~\omega_{y\bar{y}}=\ket{y\bar{y}}\bra{y\bar{y}}\\
\omega_{\bar{z}z}=\ket{\bar{z}z}\bra{\bar{z}z},~~~~\omega_{\bar{x}x}=\ket{\bar{x}x}\bra{\bar{x}x},~~~~\omega_{\bar{y}y}=\ket{\bar{y}y}\bra{\bar{y}y}\\
\omega_{\bar{z}\bar{z}}=\ket{\bar{z}\bar{z}}\bra{\bar{z}\bar{z}},~~~~\omega_{\bar{x}\bar{x}}=\ket{\bar{x}\bar{x}}\bra{\bar{x}\bar{x}},~~~~\omega_{\bar{y}\bar{y}}=\ket{\bar{y}\bar{y}}\bra{\bar{y}\bar{y}}
\end{aligned}\right\}\subset \left(V^{SEP}_{\mathbb{C}^2\otimes\mathbb{C}^2}\right)_+,	
\end{align}
where $\ket{\alpha\beta}:=\ket{\alpha}\otimes\ket{\beta}$ and $\ket{\kappa}~(\ket{\bar{\kappa}})$ is the eigenstate of Pauli operator $\sigma_\kappa$ with eigenvalue $+1~(-1)$, with $\kappa\in\{x,y,z\}$. If two states are orthogonal then they can be perfectly distinguished in quantum theory and hence can also be distinguished in $SEP$ theory since all the quantum effects are also allowed in $SEP$ theory. However, in the above set there are states that are not orthogonal to each other. For winning the game $\mathcal{P}^{[12]}_D$ in $SEP$ theory we need to show that those states are also perfectly distinguishable in $SEP$ theory. This immediately follows from the result of Arai {\it et al.} \cite{Arai19}. For completeness of our proof, here we argued the same and provide the detailed calculations. Consider the following two operators
\begin{align}
 E_1=\begin{pmatrix}
0 & 0 & 0 & ~~\frac{1}{2}\\
0 & 1 & ~~\frac{1}{2} & 0\\
0 & ~~\frac{1}{2} & 1 & 0\\
~~\frac{1}{2} & 0 & 0 & 0\\
\end{pmatrix}~~~~~\&~~~~~~~
E_2=\begin{pmatrix}
1 & 0 & 0 & -\frac{1}{2}\\
0 & 0 & -\frac{1}{2} & 0\\
0 & -\frac{1}{2} & 0 & 0\\
-\frac{1}{2} & 0 & 0 & 1\\
\end{pmatrix}.
\end{align}
Consider an arbitrary bi-partite product state $\omega_{nm}= \frac{1}{2}[\mathbb{I}+\vec{n}.\vec{\sigma}]\otimes \frac{1}{2}[\mathbb{I}+\vec{m}.\vec{\sigma}]$ where $\vec{n}\equiv(n_1,n_2,n_3)$ and $\vec{m}\equiv (m_1,m_2,m_3)$ are the Bloch vectors of the corresponding systems. Straightforward calculation yields 
\begin{align}\label{sepm}
 \tr(E_1\omega_{nm})=\frac{1+n_1m_1-n_3m_3}{2} ~~ \& ~~ \tr(E_2\omega_{nm})=\frac{1-n_1m_1+n_3m_3}{2}.
\end{align}
For arbitrary choices of Bloch vectors $\vec{n}~\&~\vec{m}$ the above expressions are always positive which assures that  $E_1,E_2\in (V^{SEP}_{\mathbb{C}^2\otimes\mathbb{C}^2})_+^{*}$. Furthermore the fact $E_1+E_2=\mathbf{1}$ ensures that $\mathcal{M}\equiv\{E_1,E_2\}$ is a valid measurement is $SEP$ theory. From the expression in Eq.(\ref{sepm}) it is immediate that
\begin{align*}
\tr(E_1\omega_{xx})=\tr(E_1\ket{xx}\bra{xx})=1,~~~~\tr(E_1\omega_{zz})=\tr(E_1\ket{zz}\bra{zz})=0,\\
\tr(E_2\omega_{xx})=\tr(E_2\ket{xx}\bra{xx})=0,~~~~\tr(E_2\omega_{zz})=\tr(E_2\ket{zz}\bra{zz})=1.
\end{align*}
Thus the pair of (nonorthogonal) states $\{\omega_{xx},\omega_{zz}\}$ can be distinguished perfectly in $SEP$ theory with the measurement $\mathcal{M}\equiv\{E_1,E_2\}$. 

We will now argue that similar measurements can be constructed for any pair of non-orthogonal states in the set $\Omega[12]$. For that we first note down the following observation.
\begin{observation}
For any $E\in(V^{SEP}_{AB})_+^{*}$ we have $(U_A\otimes U_B)E(U_A^\dagger\otimes U_B^\dagger)\in(V^{SEP}_{AB})_+^{*}$ where $U_A$ is a unitary on $\mathcal{H}_A$ and $U_B$ is a unitary on $\mathcal{H}_B$.
\end{observation}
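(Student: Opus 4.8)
The plan is to reduce the claim to a single structural fact: the minimal (separable) tensor cone $\left(V^{SEP}_{AB}\right)_+$ is invariant under conjugation by local unitaries, and then transport this invariance to the dual cone through the defining trace pairing.

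First I would unpack the definition. By construction $E\in\left(V^{SEP}_{AB}\right)_+^{*}$ means $\Tr(XE)\ge 0$ for every $X\in\left(V^{SEP}_{AB}\right)_+$, i.e.\ for every $X=\sum_i\pi_i^A\otimes\pi_i^B$ with $\pi_i^A\in\mathcal{T}_+(\mathcal{H}_A)$ and $\pi_i^B\in\mathcal{T}_+(\mathcal{H}_B)$. To show that $(U_A\otimes U_B)E(U_A^\dagger\otimes U_B^\dagger)$ again lies in $\left(V^{SEP}_{AB}\right)_+^{*}$, it suffices to verify $\Tr\!\left(X\,(U_A\otimes U_B)E(U_A^\dagger\otimes U_B^\dagger)\right)\ge 0$ for every such $X$.

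Second, I would apply cyclicity of the trace to rewrite this quantity as $\Tr\!\left((U_A^\dagger\otimes U_B^\dagger)\,X\,(U_A\otimes U_B)\,E\right)$, and note that $(U_A^\dagger\otimes U_B^\dagger)\,X\,(U_A\otimes U_B)=\sum_i\left(U_A^\dagger\pi_i^A U_A\right)\otimes\left(U_B^\dagger\pi_i^B U_B\right)$. Since conjugation by a unitary preserves positive semidefiniteness, each $U_A^\dagger\pi_i^A U_A\in\mathcal{T}_+(\mathcal{H}_A)$ and likewise for the $B$ factor, so this operator is once more an element of $\left(V^{SEP}_{AB}\right)_+$. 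Hence the trace is nonnegative because $E$ belongs to the dual cone, which is exactly the assertion. Equivalently, $X\mapsto(U_A^\dagger\otimes U_B^\dagger)X(U_A\otimes U_B)$ is a linear bijection of $\left(V^{SEP}_{AB}\right)_+$ onto itself, and the adjoint of a cone-preserving bijection preserves the dual cone.

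There is no genuine obstacle here; the one point worth stating explicitly is that the relevant invariance is of the \emph{separable} cone (a sum of products of PSD operators), not merely of the full positive cone, so I would spell out that local-unitary conjugation acts term by term and keeps each summand a product of PSD operators. The whole verification is two or three lines once the trace pairing is written out.
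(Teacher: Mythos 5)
Your proof is correct: the duality pairing plus cyclicity of the trace and the fact that local-unitary conjugation maps the separable cone onto itself is exactly the reasoning needed. The paper itself states this Observation without proof (treating it as evident), and your argument is precisely the standard justification it implicitly relies on, so there is nothing to add or correct.
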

Consider the following set of unitary operations acting on $\mathbb{C}^2$
\begin{align}
\mathcal{U}\equiv\left\{\!\begin{aligned}
A_0^y:=\frac{1}{\sqrt{2}}\begin{pmatrix}
    1 & -i\\
    -i & 1\\
    \end{pmatrix},~
A_0^z:=\begin{pmatrix}
     1 & 0\\
    0 & 1\\
    \end{pmatrix},~
B_0^x:=\begin{pmatrix}
     1 & 0\\
    0 & 1\\
\end{pmatrix},~
B_0^y:=\begin{pmatrix}
     1 & 0\\
    0 & -i\\
    \end{pmatrix}\\
A_1^y:=\frac{1}{\sqrt{2}}\begin{pmatrix}
    1 & i\\
    i & 1\\
    \end{pmatrix},~
A_1^z:=\begin{pmatrix}
     0 & -i\\
    i & 0\\
    \end{pmatrix},~
B_1^x:=\begin{pmatrix}
     1 & 0\\
    0 & -1\\
    \end{pmatrix},~
B_1^y:=\begin{pmatrix}
     1 & 0\\
    0 & i\\
    \end{pmatrix}
\end{aligned}\right\}.
\end{align}

\begin{table}[h!]
\begin{tabular}{ |p{1cm}||p{1.9cm}|p{1.9cm}|p{1.9cm}|p{1.9cm}||p{1.9cm}|p{1.9cm}|p{1.9cm}|p{1.9cm}| }
\hline
&\cellcolor{green!15}&\cellcolor{green!15} &\cellcolor{green!15}&\cellcolor{green!15}&\cellcolor{brown!25} &\cellcolor{brown!25}&\cellcolor{brown!25}&\cellcolor{brown!25}\\
&\cellcolor{green!15}~~~~~~~~$\omega_{yy}$ &\cellcolor{green!15}~~~~~~~~$\omega_{y\bar{y}}$ &\cellcolor{green!15}~~~~~~~~$\omega_{\bar{y}y}$&\cellcolor{green!15}~~~~~~~~$\omega_{\bar{y}\bar{y}}$&\cellcolor{brown!25}~~~~~~~~$\omega_{zz}$ &\cellcolor{brown!25}~~~~~~~~$\omega_{z\bar{z}}$ &\cellcolor{brown!25}~~~~~~~~$\omega_{\bar{z}z}$&\cellcolor{brown!25}~~~~~~~~$\omega_{\bar{z}\bar{z}}$\\
&\cellcolor{green!15}&\cellcolor{green!15} &\cellcolor{green!15}&\cellcolor{green!15}&\cellcolor{brown!25} &\cellcolor{brown!25}&\cellcolor{brown!25}&\cellcolor{brown!25}\\
\hline
\cellcolor{yellow!25}&&&&&&&&\\
\cellcolor{yellow!25}$~~\omega_{xx}$&$B_0^xA_0^y\otimes B_0^xA_0^y$  &$B_0^xA_0^y\otimes B_0^xA_1^y$&$B_0^xA_1^y\otimes B_0^xA_0^y$ &$B_0^xA_1^y\otimes B_0^xA_1^y$&$B_0^xA_0^z\otimes B_0^xA_0^z $&$B_0^xA_0^z\otimes B_1^xA_1^z$&$B_1^xA_1^z\otimes B_0^xA_0^z $&$B_1^xA_1^z\otimes B_1^xA_1^z$\\
\cellcolor{yellow!25}&&&&&&&&\\
\hline
\cellcolor{yellow!25}&&&&&&&&\\
\cellcolor{yellow!25}$~~\omega_{x\bar{x}}$&$B_0^xA_0^y\otimes B_1^xA_0^y$  &$B_0^xA_0^y\otimes B_1^xA_1^y$&$B_0^xA_1^y\otimes B_1^xA_0^y$ &$B_0^xA_1^y\otimes B_1^xA_1^y$&$B_0^xA_0^z\otimes B_1^xA_0^z $&$B_0^xA_0^z\otimes B_0^xA_1^z$&$B_1^xA_1^z\otimes B_1^xA_0^z $&$B_1^xA_1^z\otimes B_0^xA_1^z$\\
\cellcolor{yellow!25}&&&&&&&&\\
\hline
\cellcolor{yellow!25}&&&&&&&&\\
\cellcolor{yellow!25}$~~\omega_{\bar{x}x}$&$B_1^xA_0^y\otimes B_0^xA_0^y$  &$B_1^xA_0^y\otimes B_0^xA_1^y$&$B_1^xA_1^y\otimes B_0^xA_0^y$ &$B_1^xA_1^y\otimes B_0^xA_1^y$&$B_1^xA_0^z\otimes B_0^xA_0^z $&$B_1^xA_0^z\otimes B_1^xA_1^z$&$B_0^xA_1^z\otimes B_0^xA_0^z $&$B_0^xA_1^z\otimes B_1^xA_1^z$\\
\cellcolor{yellow!25}&&&&&&&&\\
\hline
\cellcolor{yellow!25}&&&&&&&&\\
\cellcolor{yellow!25}$~~\omega_{\bar{x}\bar{x}}$&$B_1^xA_0^y\otimes B_1^xA_0^y$  &$B_1^xA_0^y\otimes B_1^xA_1^y$&$B_1^xA_1^y\otimes B_1^xA_0^y$ &$B_1^xA_1^y\otimes B_1^xA_1^y$&$B_1^xA_0^z\otimes B_1^xA_0^z $&$B_1^xA_0^z\otimes B_0^xA_1^z$&$B_0^xA_1^z\otimes B_1^xA_0^z $&$B_0^xA_1^z\otimes B_0^xA_1^z$\\
\cellcolor{yellow!25}&&&&&&&&\\
\hline
\cellcolor{green!15}&\cellcolor{red!15}&\cellcolor{blue!15}&\cellcolor{blue!15}&\cellcolor{blue!15}&&&&\\
\cellcolor{green!15}$~~\omega_{yy}$&\cellcolor{red!15}$~~~~~~~NA$  &\cellcolor{blue!15}$~~~~~~~QD$&\cellcolor{blue!15}$~~~~~~~QD$ &\cellcolor{blue!15}$~~~~~~~QD$&$B_0^yA_0^z\otimes B_0^yA_0^z $&$B_0^yA_0^z\otimes B_0^yA_1^z$&$B_0^yA_1^z\otimes B_0^yA_0^z $&$B_0^yA_1^z\otimes B_0^yA_1^z$\\
\cellcolor{green!15}&\cellcolor{red!15}&\cellcolor{blue!15}&\cellcolor{blue!15}&\cellcolor{blue!15}&&&&\\
\hline
\cellcolor{green!15}&\cellcolor{blue!15}&\cellcolor{red!15}&\cellcolor{blue!15}&\cellcolor{blue!15}&&&&\\
\cellcolor{green!15}$~~\omega_{y\bar{y}}$&\cellcolor{blue!15}$~~~~~~~QD$  &\cellcolor{red!15}$~~~~~~~NA$&\cellcolor{blue!15}$~~~~~~~QD$ &\cellcolor{blue!15}$~~~~~~~QD$&$B_0^yA_0^z\otimes B_1^yA_0^z $&$B_0^yA_0^z\otimes B_1^yA_1^z$&$B_0^yA_1^z\otimes B_1^yA_0^z $&$B_0^yA_1^z\otimes B_1^yA_1^z$\\
\cellcolor{green!15}&\cellcolor{blue!15}&\cellcolor{red!15}&\cellcolor{blue!15}&\cellcolor{blue!15}&&&&\\
\hline
\cellcolor{green!15}&\cellcolor{blue!15}&\cellcolor{blue!15}&\cellcolor{red!15}&\cellcolor{blue!15}&&&&\\
\cellcolor{green!15}$~~\omega_{\bar{y}y}$&\cellcolor{blue!15}$~~~~~~~QD$  &\cellcolor{blue!15}$~~~~~~~QD$&\cellcolor{red!15}$~~~~~~~NA$ &\cellcolor{blue!15}$~~~~~~~QD$&$B_1^yA_0^z\otimes B_0^yA_0^z $&$B_1^yA_0^z\otimes B_0^yA_1^z$&$B_1^yA_1^z\otimes B_0^yA_0^z $&$B_1^yA_1^z\otimes B_0^yA_1^z$\\
\cellcolor{green!15}&\cellcolor{blue!15}&\cellcolor{blue!15}&\cellcolor{red!15}&\cellcolor{blue!15}&&&&\\
\hline
\cellcolor{green!15}&\cellcolor{blue!15}&\cellcolor{blue!15}&\cellcolor{blue!15}&\cellcolor{red!15}&&&&\\
\cellcolor{green!15}$~~\omega_{\bar{y}\bar{y}}$&\cellcolor{blue!15}$~~~~~~~QD$  &\cellcolor{blue!15}$~~~~~~~QD$&\cellcolor{blue!15}$~~~~~~~QD$ &\cellcolor{red!15}$~~~~~~~NA$&$B_1^yA_0^z\otimes B_1^yA_0^z $&$B_1^yA_0^z\otimes B_1^yA_1^z$&$B_1^yA_1^z\otimes B_1^yA_0^z $&$B_1^yA_1^z\otimes B_1^yA_1^z$\\
\cellcolor{green!15}&\cellcolor{blue!15}&\cellcolor{blue!15}&\cellcolor{blue!15}&\cellcolor{red!15}&&&&\\
\hline
\end{tabular}
\caption{Each cell here denotes the explicit form of unitary $U_1\otimes U_2$. For example ($B_0^xA_0^y\otimes B_0^xA_0^y$) can be used to construct the measurement $\mathcal{M}^\prime\equiv\left\{((B_0^xA_0^y)^\dagger\otimes (B_0^xA_0^y)^\dagger)E_1(B_0^xA_0^y\otimes B_0^xA_0^y),((B_0^xA_0^y)^\dagger\otimes (B_0^xA_0^y)^\dagger)E_2(B_0^xA_0^y\otimes B_0^xA_0^y)\right\}$ to distinguish $\ket{xx}$ and $\ket{yy}$ perfectly.) $QD$ stands for states which are distinguishable in Ordinary Quantum Composition. These states are definitely distinguishable in $SEP$ composition since effects allowed in Quantum theory are also allowed in $SEP$ theory. The cells named $NA$ are invalid questions as $\eta^\prime\neq\eta$ in $\mathbb{Q}(\eta,\eta^\prime)$.}
\label{tab1}
\end{table}
Any pair of non-orthogonal states in $\Omega[12]$ can be perfectly distinguished with a measurement $\mathcal{M}^\prime\equiv\{E_1^\prime,E_2^\prime\}$ which is connected to the measurement $\mathcal{M}$ through some product unitary $U_i\otimes U_j$, where $U_i,U_j\in\mathcal{U}$. The measurement $\mathcal{M}^\prime$ is therefore $\mathcal{M}^\prime\equiv\left\{(U_i^\dagger\otimes U_j^\dagger)E_1(U_i\otimes U_j),~~(U_i^\dagger\otimes U_j^\dagger)E_2(U_i\otimes U_j)\right\}$. Proper choices of the unitaries for different pairs are listed in Table \ref{tab1}.
\end{proof}
\section{Proof of Lemma ${\bf 1}$}
\begin{proof}
System $\mathcal{S}\equiv(\Omega,E)$, having the information dimension $\mathcal{I}(\mathcal{S})=k$ can always be used to encode the classical message to win the game $\mathcal{P}_D^{[n\leq k]}$ perfectly. As per definition, $\mathcal{I}(\mathcal{S})=k$ provides the upper bound on the number of pairwise distinguishable states. Here we will calculate the Information dimension of the most elementary SEP i.e., $\mathbb{C}^2\otimes_{\min}\mathbb{C}^2$, and show that $\mathcal{I}(\mathbb{C}^2\otimes_{\min}\mathbb{C}^2)=12$. So no $\mathcal{P}^{[n>12]}_D$ game can be won perfectly considering the above system. As shown by Arai {\it et.al.} \cite{Arai19}, two pure states $\rho_1=\rho_1^A\otimes\rho_1^B$ and $\rho_2=\rho_2^A\otimes\rho_2^B$ are perfectly distinguishable in SEP if and only if 
\begin{align}
\tr\rho_1^A\rho_2^A+\tr\rho_1^B\rho_2^B\le1.
\end{align}
Expressing $\rho_i^X=\frac{1}{2}\left(\mathbf{1}+\hat{n}_i^X\cdot\vec{\sigma}\right)$, where $\hat{n}_i^X\in\mathbb{R}^3$ with $|\hat{n}_i^X|=1$ for $i\in\{1,2\}$ and for $X\in\{A,B\}$, the above condition can be re-written as
\begin{align}
\hat{n}_1^A\cdot \hat{n_2}^A+\hat{n}_1^B\cdot \hat{n}_2^B\le0.
\end{align}
Let us define $N_i:= (\hat{n}_i^A,\hat{n}_i^B)^T \in\mathbb{R}^3\oplus\mathbb{R}^3\equiv\mathbb{R}^6$. Thus the above condition reads as 
\begin{align}
N^T_1.N_2 \leq 0.
\end{align}
Therefore, our question boils down to finding the maximum number of vectors in $\mathbb{R}^6$ such that the inner product between any two vectors is not positive definite. 
 
It is not hard to argue that in a  \textit{d}-dimensional vector space at the most $2d$ number of such vectors can be drawn. To appreciate the argument, observe that, in $1$-D only two such vectors can be drawn trivially, while in $2$-D the number is \textit{four}. Now, for $\mathbb{R}^{d}$ we can decompose it as $\mathbb{R}^{d-1}\oplus\mathbb{R}$. As mentioned earlier the $d^{th}-$dimension $\mathbb{R}$ contains only two such vectors and $\mathbb{R}^{d-1}$ can be further decomposed as $\mathbb{R}^{d-2}\oplus\mathbb{R}$. Thus, it can be argued from the method of induction that the maximum number of such vectors is $2d$.
\end{proof}

\section{Proof of Theorem ${\bf 2}$}
\begin{proof}
Let us denote the bipartite-elementary system in $SEP$ as $\mathcal{S}_E= \mathbb{}{C}^2\otimes_{\min}\mathbb{C}^2$ whose state space is constructed through minimal tensor product of the state space of two qubit systems. We have already seen that there are at most twelve pairwise distinguishable states in $\mathcal{S}_E$ i.e., $\mathcal{I}(\mathcal{S}_E)=12$. Let us denote these twelve states as $i\in\{1,\cdots,12\}$, where any two of them such as $\{i,j\}$ can be pairwise distinguished through some elementary $SEP$ measurement whenever $i\neq j~\&~ i,j\in\{1,\cdots,12\}$.
 
Consider $k$ number of bipartite-elementary systems $\mathcal{S}^{\otimes_{\min} k}_E\equiv(\mathbb{C}^2\otimes_{\min}\mathbb{C}^2)^{\otimes_{\min}k}$ and the $2k$-parties product state 
\begin{align}
X=i_1\otimes i_2\otimes\cdot \cdot\cdot \otimes i_t\otimes \cdot\cdot\cdot\otimes i_k, \end{align}
where $i_t \in \{1,\cdots,12\}~\forall~t\in\{1,\cdots,k\}$ and $t$ denotes the $t^{th}$ bipartite-elementary SEP. Consider another such $2k$-parties product state
\begin{align}
Y=j_1\otimes j_2\otimes\cdot \cdot\cdot \otimes j_t\otimes \cdot\cdot\cdot\otimes j_k,
\end{align}
where $j_t\in\{1,\cdots,12\}$. The state $X$ will be perfectly distinguishable to the state $Y$ whenever $i_t \neq j_t$ for some (at least one) $t\in\{1,\cdots,k\}$. For that particular $t^{th}$ bipartite-elementary system Bob will follow the pairwise distinguishability strategy as discussed in the proof of Theorem 1. Therefore, for $2k$ number of $SEP$-bits we can construct a set of $12^k$ number of different states that are pairwise distinguishable by following the aforesaid procedure. 

The construction above ensures that while playing the game $\mathcal{P}^{[12^k]}_D$ with elementary $SEP$ bits, communication of $2k$ $SEP$ bits from Alice to Bob suffices for a perfect winning strategy. Alice encodes her messages is the set of states constructed above and Bob decodes the messages based on the question given to him. It is not hard to see that $2k + \lceil {k\log3}\rceil$ number of qubits communication is necessary for winning this game. Here, $\lceil{x}\rceil$ denotes the ceiling function of $x$ and $\log$ is in base 2. In other words, the advantage of $SEP$ composition over quantum composition increases linearly with the increase in number of elementary systems. Here we would like to mention that the advantage we have reported might not be optimal. There is potential to get more advantage. Note that during the decoding step Bob addresses (at most) two elementary systems together in the strategy we have considered. There is a possibility that the number of pairwise distinguishable states will increase if Bob addresses all the systems together during his decoding process. 
\end{proof}

\section{Time-like paradigm: two different scenarios}
While studying the correlation in time-like paradigm, the following two broad scenarios can be considered. 

{\bf Scenario-I:} Suppose Alice and Bob are two parties isolated in their respective lab. A referee gives some classical random variable $x\in X$ to Alice while Bob has to return back some classical random variable $b\in B$ to the referee who then provides some payoff $\mathcal{P}_I:X\times B\to\mathbb{R}$ to them as a team. Alice is allowed to send some physical system (classical / quantum / more general system) to Bob. These different physical systems are assumed to obey some common condition. For instance, the number of perfectly distinguishable states must be bounded by some natural number $n\in \mathbb{N}_+$, called measurement dimension \cite{Brunner14(1)}. For $n=2$, a classical two level system (bit), respectively a quantum two level system (qubit), or a toy system with square shaped state space (square bit) satisfy the aforesaid measurement dimension criterion. Alice and Bob are allowed to decide on some shared strategies for which they are allowed to use classical shared randomness.

In such a scenario, the seminal no-go result by Holevo \cite{Holevo73} establishes that a qubit is no better than a classical bit when the payoff is measured through the mutual information $I(X:B)$ between input random variable $X$ and output random variable $B$. More recently, Frenkel and Weiner have strengthened this no-go implication by establishing that a qubit is no better than a classical bit no matter what payoff is considered to compare their communication utilities \cite{Frenkel15}. With the initials of the authors we call this H-FW scenario. The result of \cite{Frenkel15} further motivates the authors in \cite{DallArno17} to introduce the concept called signalling dimension.
\begin{definition}
[PRL 119, 020401] The signaling dimension of a system $S$, denoted by $\kappa(S)$, is defined as the smallest integer $d$ such that $\mathbb{P}^{m\to n}_S\subseteq \mathbb{P}^{m\to n}_{C_d}$, for all $m$ and $n$.
\end{definition}
Where $\mathbb{P}^{m\to n}_{C_d}$ is the set of all $m$-input ($|X|=m$) and $n$-output ($|B|=n$) conditional probability distributions that can be obtained by means of a $d$-level classical noiseless channel and shared random data, and similarly the set $\mathbb{P}^{m\to n}_S$ is defined. The 'No-Hypersignaling Principle' (NHP) introduced in \cite{DallArno17} deals with how the signaling dimension should behave under composition of different systems. More precisely, the principle will be satisfied if and only if signaling dimension of a composite system $\otimes_k S_k$ satisfies $\kappa(\otimes_k S_k)\le\Pi_k\kappa(S_k)$.
\begin{figure}
\centering
\includegraphics[scale=0.6]{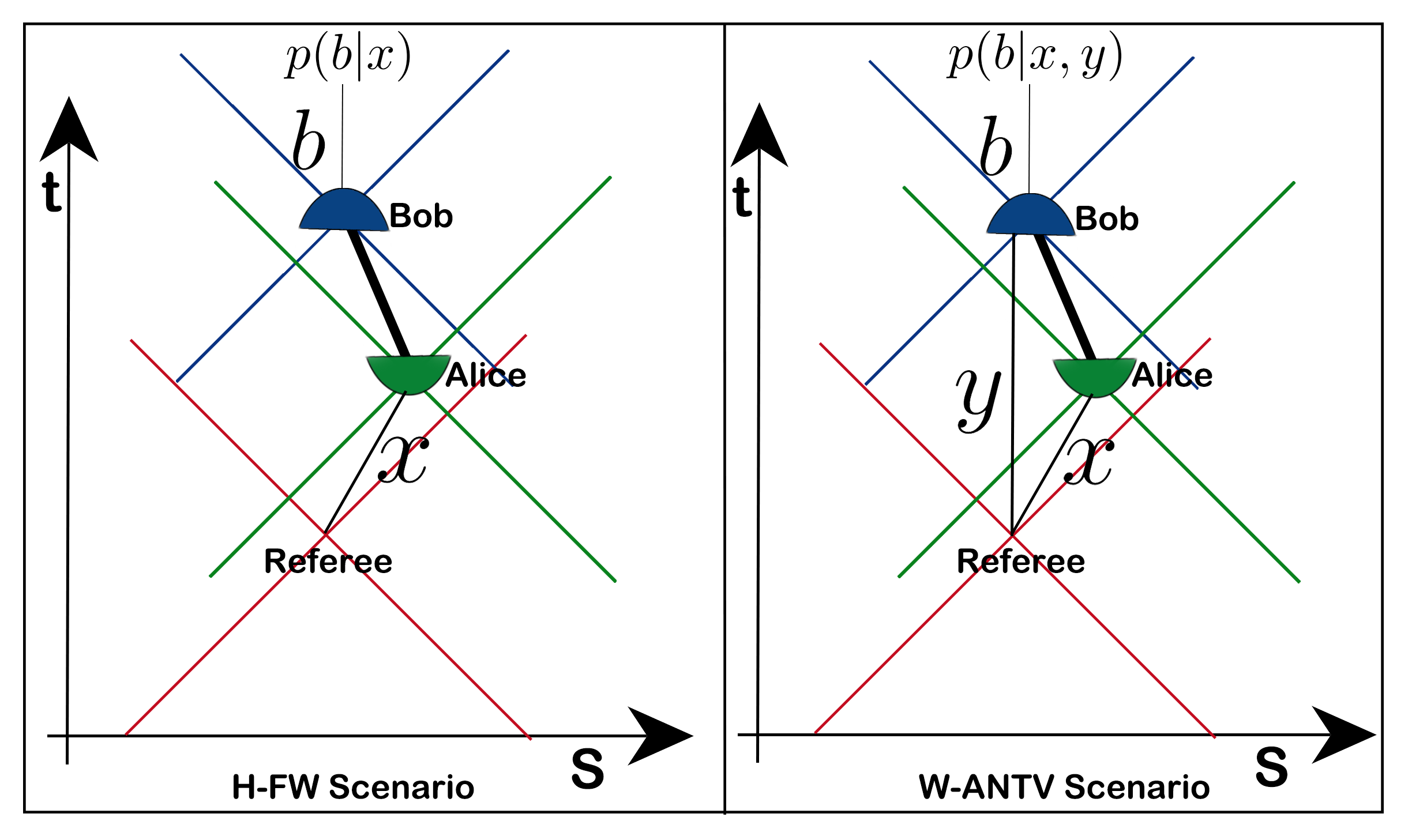}
\caption{(Color online) Two inequivalent scenarios in time-like paradigm. The figure on the left shows a schematic space-time diagram of the H-FW scenario \cite{Holevo73, Frenkel15, DallArno17} where Alice encodes a classical message $x$, received from the referee, into a system and sends it to Bob. Bob has to output a classical variable $b$ and they are rewarded based on the observed statistics $p(b|x)$ and a suitably chosen payoff. The figure on the right tells the story of the W-ANTV scenario \cite{Ambainis99,Ambainis02,Wiesner83,Czekaj17} where both Alice and Bob receive classical variables $x$ and $y$, respectively, from the referee. The input variables are not independently generated in general. Alice encodes her input in a system and sends it to Bob, who outputs the classical variable $b$. The reward in this scenario is based on the statistics $p(b|x,y)$ and a suitably chosen payoff.}    \label{fig:timelike}
\end{figure}

{\bf Scenario-II:} In this case, both Alice and Bob are given random variables $x\in X$ and $y\in Y$ (generally $y$ is some query regarding $x$) respectively and only Bob has to return some classical random variable $b\in B$ to the referee. Accordingly they are given some payoff $\mathcal{P}_{II}:X\times Y\times B\to \mathbb{R}$. The task of RAC studied in \cite{Ambainis99,Ambainis02} and originally proposed in \cite{Wiesner83} lies within this scenario, and hence we call it W-ANTV scenario.

Importantly, unlike the  Frenkel and Weiner work \cite{Frenkel15}, no such generic result is possible here. Depending on the payoff to quantify the communication utility, a qubit may or may not be advantageous over its classical counterpart. For instance, if we consider communication utility of a qubit system to be measured by its success in RAC game then it is strictly advantageous than a classical bit. On the other hand, if the utility is captured through the perfect success in the $\mathcal{P}_D^{[n]}$ game then a qubit is no good than a bit as its information dimension is identical to the measurement dimension. Interestingly, our results show that utility of two qubits in $\mathcal{P}_D^{[n]}$ depends on what kind of compositions are assumed between them. While quantum composition is not advantageous, the SEP composition turns out to be advantageous in this game and hence establishes stronger correlation in time-like paradigm. For a more detailed discussion of these two scenarios we refer to a recent work by the same group of authors \cite{Patra22}. 

Consider now the square bit model as studied in \cite{DallArno17}. The four pure states and four ray extremal effects are given by,
\begin{align*}
\omega_0=\begin{pmatrix}1\\0\\1 \end{pmatrix},~~ \omega_1=\begin{pmatrix}0\\1\\1 \end{pmatrix},~~
\omega_2=\begin{pmatrix}-1\\0\\1 \end{pmatrix},~~
\omega_3=\begin{pmatrix}0\\-1\\1 \end{pmatrix};\\
e_0=\begin{pmatrix}1\\1\\1 \end{pmatrix},~~
e_1=\begin{pmatrix}-1\\1\\1 \end{pmatrix},~~
e_2=\begin{pmatrix}-1\\-1\\1 \end{pmatrix},~~
e_3=\begin{pmatrix}1\\-1\\1 \end{pmatrix}.
\end{align*}
Any valid bipartite composition should allow the following $16$ product states and effects:
\begin{align*}
\Omega_{4i+j}:=\omega_i\otimes\omega_j^T,~~~~E_{4i+j}:=e_i\otimes e_j^T,~~~~~i,j\in\{0,1,2,3\}. 
\end{align*}
Furthermore there exist $8$ more states $\{\Omega_{16},\cdots,\Omega_{23}\}$ that are not product ({\it i.e.} entangled) but yields positive probability on any product effect. Similarly there are $8$ more entangled effects $\{E_{16},\cdots,E_{23}\}$ that yield consistent probability on any product state. Explicit form of these states and effects are available in \cite{DallArno17}. Note that square bit model has information dimension $4$ while its measurement dimension is $2$, and hence itself allows stronger time-like correlation than a qubit as established through the $\mathcal{P}_D^{[n]}$ game. However, the NHP approach cannot witness this stronger time-like correlation as NHP is applicable only to composite systems. 
%NHP approach studied within the Scenario-I is expected to be weaker than our $\mathcal{P}_D^{[n]}$ task studied within Scenario-II.
The authors in \cite{DallArno17} have identified the following four consistent composite models: 
\begin{itemize}
\item {\bf PR model:} Contains all $24$ states and only product effects.
\item {\bf HS model:} Contains only product states and all $24$ effects.
\item {\bf Hybrid model:} Contains suitably chosen product and entangled states and effects.
\item {\bf Hybrid model:} Contains suitably chosen product and entangled states and effects allowing no dynamics.
\end{itemize}
and establish stronger than quantum time-like correlations for HS model and Hybrid model, while fails to do so for the PR model. Our approach however reveals even a stronger type of time-like correlation for PR model as all $24$ different states can be pairwise distinguished in this model leading to super-multiplicity of information dimension.

On the other hand, the $\mathcal{P}_D^{[n]}$ game as well as information content principle (ICP) studied in \cite{Czekaj17} fit within the W-ANTV scenario. Note that, ICP involves payoff quantified in terms of entropic quantity and uses the concept of strong subadditivity that holds for both the quantum and classical world. A careful analysis is required while studying the ICP in generalized probability theory framework. Here we note that if in some theory a mixed state allows non-unique decompositions in terms of different sets of perfectly distinguishable states with different probabilities, then a natural notion of entropy is not well defined in that theory \cite{Krumm17,Takakura19}. Now, consider the bipartite  state $\rho_{AB}=\frac{1}{2}\ket{00}_{AB}\bra{00}+\frac{1}{2}\ket{++}_{AB}\bra{++}$, whose spectral decomposition is given by 
\begin{align*}
\rho_{AB}&=\frac{3}{4}\ket{\psi_1}_{AB}\bra{\psi_1}+\frac{1}{4}\ket{\psi_2}_{AB}\bra{\psi_2}, ~~~\mbox{where},\\
\ket{\psi_1}&:=\frac{1}{\sqrt{12}}(3\ket{00}+\ket{01}+\ket{10}+\ket{11}),\\
\ket{\psi_2}&:=\frac{1}{2}(-\ket{00}+\ket{01}+\ket{10}+\ket{11})
\end{align*}
However, the pair of states $\{\ket{00},\ket{++}\}$ is perfectly distinguishable in the SEP theory. As a consequence the state $\rho$ can be assigned with two different entropies  $H\left(\frac{1}{2},\frac{1}{2},0,0\right)$ and $H\left(\frac{3}{4},\frac{1}{4},0,0\right)$, and fails to satisfy some consistency requirement; $H(q_i):=-\sum_iq_i\log q_i$. Thus the the study of ICP while considering different possible decompositions of local quantum system might require further assumptions. In that respect, $\mathcal{P}_D^{[n]}$ game is easy to analyse across these different compositions as the payoff in this case involves a very intuitive notion of pairwise distinguishability.

\section{Toy composition allowing exotic space-like and time-like correlations}
As discussed in the main manuscript, structure of a GPT is characterized by specifying three basic elements -- state space, effect space, and transformations, in particular reversible ones. Same is true when considering compositions. While considering quantum composition, the set of reversible transformations are the set of unitaries, product as well as the joint ones, and in this case all pure states as well as all ray extremal effects are connected to each other through some reversible transformation. However, for minimal and maximal composition one can consider only the product unitaries. For minimal composition it implies all states are connected but not all the ray extremal effects, while for maximal tensor product this is exactly opposite (see Table \ref{tab2}.)

{\bf Frozen model:} Let us now considered a model with allowed state $\ket{\psi^-}$ along with the separable state space. To make the theory consistent we consider the effect space to be $\left(V_{AB}^{SEP}\right)_+^\star~\setminus~\mathbb{E}_{\psi^-}$, where $\bra{\psi^-}E\ket{\psi^-}<0,~\forall~E\in\mathbb{E}_{\psi^-}\subset\left(V_{AB}^{SEP}\right)_+^\star$. Furthermore, we consider the model to be `frozen' by imposing the condition that no reversible dynamics is allowed. Toy example of such a model can be obtained by composing of two square bits \cite{DallArno17}. For this frozen model the state space lies strictly in between composite quantum state space and state space obtained by minimal composition. Since the set of effects $\mathbb{E}_{\psi^-}$ is discarded from the effect space, all the measurements listed in Table \ref{tab1} will not be allowed anymore. However, the set of product states $\Omega[5]:=\{\omega_{xx},\omega_{yy},\omega_{\Bar{y}\Bar{y}},\omega_{zz},\omega_{\Bar{z}\Bar{z}}\}$ is still perfectly pairwise distinguishable . Therefore, this frozen model allows time-like correlations stronger than quantum theory. This model will also allow nonlocal space-like correlation due to the presence of the entangled state $\ket{\psi^-}$.
\begin{table}[h!]
\centering
\begin{tabular}{|c||c|c|c|c|c|c|} 
\hline
~~\#~~&Composition&States($\Omega$) & Effects(E) & Reversible Transformation (RT) &\multicolumn{2}{c|}{Connectivity through RT} \\[0.6ex]\cline{6-7}
& & &  &  & ~~~~~States~~~~~ & Effects\\
\hline\hline
1&$\otimes_Q$~~~~&{$\left(V_{AB}^{Q}\right)$} & {$\left(V_{AB}^{Q}\right)$} &{$\{U_{AB},U_A\bigotimes U_B\}$}&\checkmark & \checkmark\\%[1ex]
%(Self Dual)&  &  & &  & \\
\hline
2&$\otimes_{\min}$& {$\left(V_{AB}^{SEP}\right)_+$} & {$\left(V_{AB}^{SEP}\right)_+^\star$} &{$\{U_A\bigotimes U_B\}$}&\checkmark &$\times$ \\
\hline
3&$\otimes_{\max}$&{$\left(V_{AB}^{SEP}\right)_+^\star$} & {$\left(V_{AB}^{SEP}\right)_+$} &{$\{U_A\bigotimes U_B\}$}&$\times$&\checkmark\\
\hline
4&$\otimes_{\min}+NP$& {$\left(V_{AB}^{SEP}\right)_+~\bigcup~ \{\psi^-\}$} & {$\left(V_{AB}^{SEP}\right)_+^\star~\setminus~\mathbb{E}_{\psi^-}$} &{Frozen}&$\times$  &$\times$ \\
\hline
$\cdot$&$\cdot$&$\cdot$&$\cdot$&$\cdot$&$\cdot$&$\cdot$\\
\hline
\end{tabular}
\caption{Different compositions of two quantum systems. Quantum composition is very special due to self duality of the state \& effect cones, and all the pure states as well as the ray extremal effects are connected through RT. For minimal tensor product all pure state are connected through RT, but not all the ray extremal effects; and in the case of maximal tensor product the situation is exactly the opposite. From a dynamical point of view, the Frozen model is extremely constrained as it does not allow any RT. However, in this model one can have time-like correlations stronger than quantum theory, and this model also allows nonlocal space-like correlations.}
\label{tab2}
\end{table}

Naturally the question arises whether one can have a composition that will allow stronger than quantum correlations both in the time-like and space-like separated scenarios. Any state space lying in between quantum and minimal tensor product will not allow stronger than quantum correlations in the space-like domain because beyond-quantum effects cannot be realized in the space-like separated scenario. One must look for some state space lying strictly in between quantum and maximal tensor product. The result in \cite{Barnum10} implies that for such bipartite compositions one cannot obtain stronger than quantum correlations in the space-like domain in the standard Bell experiment. However, a recent result by some of the present authors shows that for such compositions one can have space-like correlations stronger than quantum when the Bell scenario is suitably generalized \cite{Lobo21}. Considering such a generalized scenario it is therefore possible to obtain stronger than quantum correlations both in time-like and space-like domains. We leave the explicit constructions as an open question for further research.

\end{document}